\documentclass[journal,draftclsnofoot,onecolumn,12pt,twoside]{IEEEtran}
\IEEEoverridecommandlockouts
\usepackage{xcolor}
\usepackage{tikz}
\usetikzlibrary{spy,arrows}
\usepackage{pgfplots}
\usepackage[linesnumbered,ruled,vlined]{algorithm2e}
\SetKwInput{kwInit}{Initialization}
\usepackage{amsmath,epsfig,amssymb,amsthm,cite,url,xcolor}
\usepackage{hyperref}
\usepackage{enumerate}
\usepackage[latin1]{inputenc}
\DeclareMathOperator{\tr}{tr}

\DeclareMathOperator{\var}{Var} 
\DeclareMathOperator{\cov}{Cov} 
\theoremstyle{plain}
\newtheorem{theorem}{Theorem}
\newtheorem{assumption}{Assumption}

\theoremstyle{definition}

\newtheoremstyle{specialcasestyle}{1mm}{1mm}{\upshape}{}{\bfseries\upshape}{.}{0mm}{}
\theoremstyle{specialcasestyle}

\newcommand{\figref}[1]{Fig.~\protect\ref{#1}}

\newcommand{\ex}{{\mathbb E}}

\newcommand{\dto}{\overset{d}\longrightarrow }
\newcommand{\pto}{\overset{P}\longrightarrow }
 
\makeatletter
\def\endthebibliography{%
  \def\@noitemerr{\@latex@warning{Empty `thebibliography' environment}}%
  \endlist
}
\makeatother
\hyphenation{}
\usepackage{graphicx}
\usepackage{placeins}
\usepackage{float}
\usepackage{tabularx}
\usepackage{tkz-euclide,subfigure}
\usepackage{epsfig,amssymb}
\usepackage{amsmath}
\usepackage{bbm}
\usetikzlibrary{intersections}
\usepgfplotslibrary{fillbetween}

\usepackage{mathtools}
\pgfplotsset{compat=1.12}
\usepackage[detect-all]{siunitx}
\usepackage{etoolbox}
\usepackage{svg}
\DeclareMathOperator*{\argmin}{arg\,min}
\DeclareMathOperator*{\argmax}{arg\,max}

\DeclarePairedDelimiter{\norm}{\lVert}{\rVert} 
\begin{document}

\title{Semi-Supervised Learning via Cross-Prediction-Powered Inference for Wireless Systems}


\author{\IEEEauthorblockN{Houssem Sifaou and Osvaldo Simeone}
 \thanks{The authors are with the King's Communications, Learning $\text{\&}$ Information Processing (KCLIP) lab within the Centre for Intelligent Information Processing Systems (CIIPS), Department of Engineering, King's College London, WC2R 2LS London, U.K. (e-mail: houssem.sifaou@kcl.ac.uk; osvaldo.simeone@kcl.ac.uk).
    }
    
    \thanks{The work of H. Sifaou and O. Simeone was supported by the European Union's Horizon Europe project CENTRIC (101096379). O. Simeone was also supported by the Open Fellowships of the EPSRC (EP/W024101/1) by the EPSRC project (EP/X011852/1), and by Project REASON, a UK Government funded project under the Future Open Networks Research Challenge (FONRC) sponsored by the Department of Science Innovation and Technology (DSIT).}

}
\maketitle

\begin{abstract}
In many wireless application scenarios, acquiring labeled data can be prohibitively costly, requiring complex optimization processes or measurement campaigns. 
Semi-supervised learning leverages unlabeled samples to augment the available dataset by assigning synthetic labels obtained via machine learning (ML)-based predictions. However, treating the synthetic labels as true labels may yield worse-performing models as compared to models trained using only labeled data. Inspired by the recently developed prediction-powered inference (PPI) framework, this work investigates how to leverage the synthetic labels produced by an ML model, while accounting for the inherent bias concerning true labels. To this end, we first review PPI and its recent extensions, namely tuned PPI and cross-prediction-powered inference (CPPI). { Then, we introduce two novel variants of PPI. The first, referred to as tuned CPPI, provides CPPI with an additional degree of freedom in adapting to the quality of the ML-based labels. The second, meta-CPPI (MCPPI), extends tuned CPPI via the joint optimization of the ML labeling models and of the parameters of interest.} Finally, we showcase two applications of PPI-based techniques in wireless systems, namely beam alignment based on channel knowledge maps in millimeter-wave systems and received signal strength information-based indoor localization. Simulation results show the advantages of PPI-based techniques over conventional approaches that rely solely on labeled data or that apply standard pseudo-labeling strategies from semi-supervised learning. Furthermore, the proposed tuned CPPI method is observed to guarantee the best performance among all benchmark schemes, especially in the regime of limited labeled data.

    
\end{abstract}
\begin{IEEEkeywords}
Prediction-powered inference, semi-supervised learning, channel knowledge map, indoor localization
\end{IEEEkeywords}

\maketitle

\section{Introduction}
\label{Intro}
\subsection{Context and Motivation}
Next-generation wireless systems are expected to rely extensively on machine learning (ML) and data-driven decision-making~\cite{sun2019application, simeone2018very,mao2018deep}. Optimizing effective ML algorithms hinges on the availability of high-quality labeled data. However, obtaining labeled data is a challenging task in numerous wireless scenarios due to the need to run time-consuming optimizations~\cite{liao2022deep,shental2019machine} or to collect data via on-air transmission~\cite{6244790,zhang2023csi}. Semi-supervised learning via pseudo-labeling provides a promising alternative by leveraging synthetic labels produced by ML models for unlabeled data~\cite{yang2022survey,van2024generative,li2023exploiting,yoo2017semi,kim2019semi,nguyen2020supervised,nayebi2017semi,soares2023semi,camelo2019semi,chen2024neuromorphic}. However, predictions generated by  ML models may be of insufficient quality. Therefore, making reliable use of synthetic labels requires an additional effort to reduce the bias caused by the discrepancy between synthetic and real labels. This is the focus of this work.


Specifically, we consider a semi-supervised setting in which unlabeled data $\tilde X$ are abundant, while labels $Y$ are difficult to obtain. The goal is to learn a parameter vector $\theta$. As illustrated in \figref{ppi_Fig}, we assume the availability of an ML model $f(X)$ that can assign a synthetic label $ f(X)$ to any input $X$. Assuming the model $f(X)$ to be pretrained, conventional semi-supervised learning schemes would augment the dataset with the synthetic data $(\tilde X, f(\tilde X))$ obtained from the unlabeled data $\tilde X$ to estimate parameter $\theta$~\cite{lee2013pseudo}.

Note that we do not explore here semi-supervised methods that aim at extracting information from the covariate distribution, e.g., via generative models or via unsupervised pre-training; or that augment the data sets by generating new samples $(\tilde{X},Y)$ based on manipulation of existing inputs $X$ (see, e.g., \cite{raviv2023data}). Such methods are complementary to the pseudo-labeling formulation adopted here and may be potentially combined with it~\cite{angelopoulos2023prediction}.

\begin{figure}[t!]
    \centering    \includegraphics[scale = 0.61]{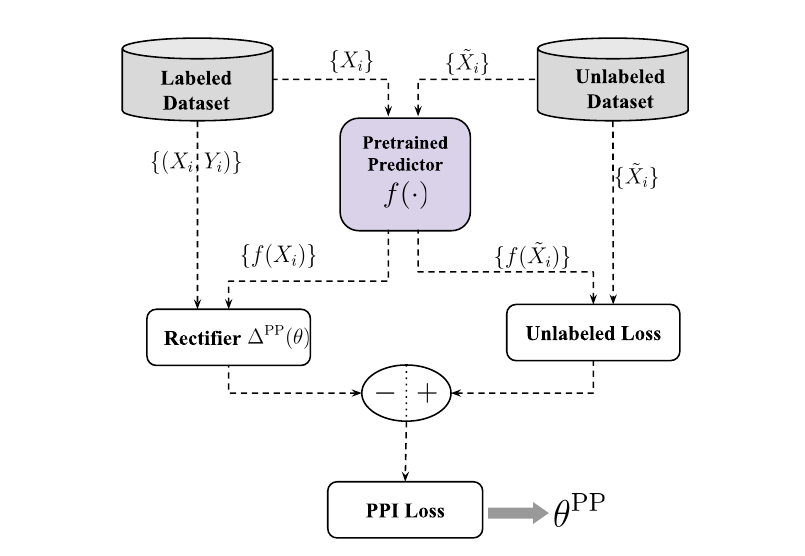} \vspace{-0mm}
    \caption{Illustration of the original PPI scheme \cite{angelopoulos2023prediction}: Using the labeled data and a pre-trained model $f(\cdot)$,  the rectifier term $\Delta^{\rm PP}(\theta)=\frac{1}{n}\sum_{i=1}^n \left[\ell_\theta(X_i,f(X_i))-\ell_\theta(X_i,Y_i)\right]$ is evaluated to estimate the prediction bias of the model $f(\cdot)$. This term is subtracted from the unlabeled loss $ \frac{1}{N}\sum_{i=1}^N \ell_\theta(\tilde X_i,f(\tilde X_i)) $, obtaining the PPI loss $L^{\rm PP}(\theta)$ in \eqref{loss_ppi_intro}.}
    \vspace{0mm}
    \label{ppi_Fig}
\end{figure}
The \emph{prediction-powered inference} (PPI) framework, introduced in~\cite{angelopoulos2023prediction} { (see also~\cite{dudik2014doubly,10.1214/16-STS589} for prior related art), takes a different approach by using a limited amount of labeled data $\{(X_i,Y_i)\}_{i=1}^n$ to estimate the bias $\Delta^{\rm PP}(\theta)$ caused by the mismatch between true labels $Y$ and predicted labels $f(\tilde X)$ as the average difference
\begin{align}
\Delta^{\rm PP}(\theta) = \frac{1}{n}\sum_{i=1}^n \left[\ell_\theta(X_i,f(X_i))-\ell_\theta(X_i,Y_i)\right],
\end{align}
where $\ell_\theta( X,Y)$ is a loss function dependent on the parameter $\theta$ under optimization.} Given an unlabeled dataset $\{\tilde X_i\}_{i=1}^N$ and the \emph{rectifier} term $\Delta^{\rm PP}(\theta)$, PPI estimates the population loss as 
\begin{align}
L^{\rm PP}(\theta) = \frac{1}{N}\sum_{i=1}^N \ell_\theta(\tilde X_i,f(\tilde X_i)) - \Delta^{\rm PP}(\theta),
\label{loss_ppi_intro}
\end{align}
The PPI approach is illustrated in \figref{ppi_Fig}.

The original PPI method may not necessarily improve over a baseline empirical risk minimization (ERM) that disregards unlabeled data. To obviate this issue, reference~\cite{angelopoulos2023ppi} introduced \emph{tuned PPI}, which adapts the use of unlabeled data depending on the quality of the predictions.
 
 Both PPI in~\cite{angelopoulos2023prediction} and tuned PPI in~\cite{angelopoulos2023ppi} assume the availability of a pretrained ML model $f(X)$ for annotating the unlabeled data. To alleviate this assumption, the authors in~\cite{zrnic2023cross} proposed \emph{cross-prediction-powered inference} (CPPI). In CPPI, the labeled data must be shared between the tasks of training the model $f(X)$ and evaluating a rectifier $\Delta^{}(\theta)$. Note that, unlike semi-supervised methods such as co-training techniques, this approach does not require the extraction of different ``views'' from the input, as done in image processing \cite{qiao2018deep}.

 To this end, as illustrated in \figref{cppi_Fig}, CPPI operates in a way similar to cross-validation techniques~\cite{stone1978cross,stone1974cross,cohen2024cross,park2023few}. Accordingly, $K$ models $\{f^{(k)}(X)\}_{k=1}^K$ are trained using different parts of the labeled dataset. These models are then used to annotate the unlabeled data and to obtain a rectifier for the prediction bias $\Delta^{\rm CP}(\theta)$ {  given by the average difference
\begin{align}
 \Delta^{\rm CP}(\theta)=\frac{1}{n}\sum_{k=1}^K \sum_{i\in\mathcal{D}^{(k)}}\left[\ell_\theta(X_i,f^{(k)}(X_i)) -\ell_\theta(X_i,Y_i)\right],\end{align}
where $\mathcal{D}^{(k)}$ is the part of the dataset not used to train model $f^{(k)}(\cdot)$}. This correction term is subtracted from the estimated loss $ \frac{1}{KN}\sum_{k=1}^K\sum_{i=1}^N \ell_\theta(\tilde X_i,f^{(k)}(\tilde X_i))$ to obtain the estimate
\begin{align}
L^{\rm CP}(\theta) = \frac{1}{KN}\sum_{k=1}^K\sum_{i=1}^N \ell_\theta(\tilde X_i,f^{(k)}(\tilde X_i)) - \Delta^{\rm CP}(\theta).
\label{cppi_loss_intro}
\end{align}

In this work, we showcase application scenarios for the PPI framework in wireless systems. { We also introduce two new versions of CPPI. The first, called tuned CPPI, allows the flexibility to adapt the use of unlabeled data as a function of the quality of the trained prediction models $\{f^{(k)}(X)\}_{k=1}^K$. The second, referred to as meta-CPPI (MCPPI), integrates CPPI with meta pseudo-labeling (MPL)~\cite{pham2021meta}, a complementary approach that iteratively optimizes both the parameter of interest $\theta$ and the labeling model $f(\cdot)$.}

\begin{figure}[t!]
    \centering    \includegraphics[scale = 0.51]{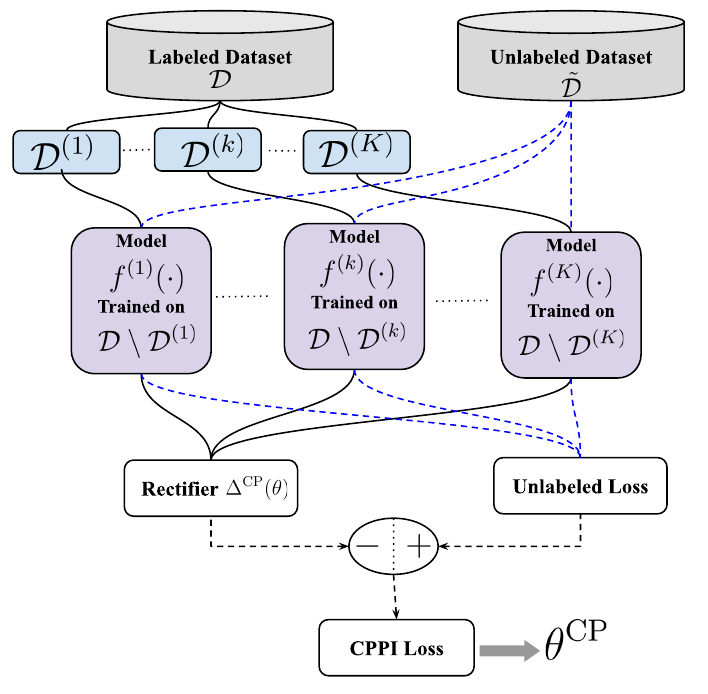} \vspace{-0.1cm}
    \caption{Illustration of the CPPI scheme \cite{zrnic2023cross}: The labeled data is divided into $K$ folds $\mathcal{D}^{(1)},\cdots,\mathcal{D}^{(K)}$, and $K$ prediction models are trained, with each model $f^{(k)}(\cdot)$ being trained on all labeled data except for fold $\mathcal{D}^{(k)}$. Using the $K$ trained models, a rectifier $\Delta^{\rm CP}(\theta)=\frac{1}{n}\sum_{k=1}^K \sum_{i\in\mathcal{D}^{(k)}}\left[\ell_\theta(X_i,f^{(k)}(X_i)) -\ell_\theta(X_i,Y_i)\right]$ is evaluated that estimates the prediction bias of models $\{f^{(k)}(\cdot)\}_{k=1}^K$. This term is subtracted from the unlabeled loss $ \frac{1}{KN}\sum_{k=1}^K\sum_{i=1}^N \ell_\theta(\tilde X_i,f^{(k)}(\tilde X_i)) $, obtaining the CPPI loss $L^{\rm CP}(\theta)$ in \eqref{cppi_loss_intro}.}
    \vspace{0mm}
    \label{cppi_Fig}
\end{figure}
 
\begin{figure*}[t!]
\centering 
\subfigure[Beam alignment]{
    \centering    \includegraphics[scale = 0.85]{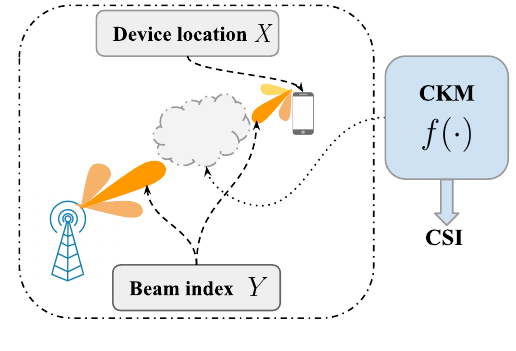} \vspace{-0.5cm}
    \vspace{0mm}
    \label{app_beam}}
    \subfigure[Indoor localization]{
    \centering    \includegraphics[scale = 0.83]{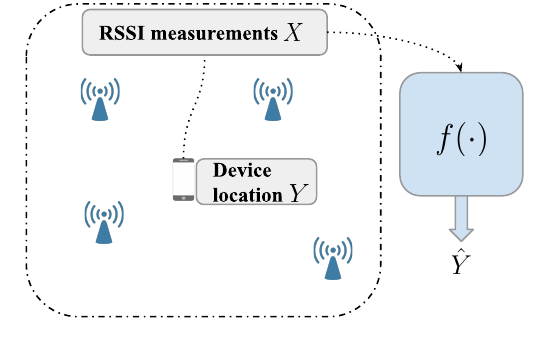} \vspace{-0.5cm}
    \vspace{0mm}
    \label{app_loc}}
     \caption{Illustration of the two application scenarios: (a) beam alignment in mmWave communication systems, in which the optimal beam index $Y$ is determined based on the device location $X$; and (b) an indoor localization system based on RSSI, in which the position of the device $Y$ is predicted based on RSSI measurements $X$ received from access points. In both cases, a pre-trained ML model $f(.)$ can be used to augment the labeled datasets, with a channel knowledge map (CKM) adopted for beam alignment.}
    \label{applications}
\end{figure*}

\subsection{Main Contributions}

The objective of this work is twofold. First, we study the benefits and potential application scenarios of the PPI framework for wireless communication systems. To this end, we first review PPI, tuned PPI, and CPPI, and then study two use cases in wireless systems, namely beam-alignment~\cite{zeng2021toward,wu2021environment} and localization~\cite{5290385,6244790,zhang2023csi}. Second, we propose a new variant of CPPI, referred to as tuned CPPI, that endows CPPI with the ability to adapt the use of unlabeled data to the quality of the ML prediction models, and thus to the amount of available labeled data.

To demonstrate the potential benefits of PPI and of the proposed tuned CPPI scheme for wireless systems, we study two application scenarios in which unlabeled data are more accessible than labeled data. As illustrated in Fig.~\ref{app_beam}, the first use case is beam alignment in millimeter-wave (mmWave) massive MIMO systems via channel knowledge maps (CKMs)~\cite{zeng2021toward,wu2021environment,heng2024site}. Beam alignment requires selecting the index $Y$ of the optimal beams within predefined codebooks given the location $X$ of the device being served. CKMs represent site-specific databases containing channel information at given locations of receivers and transmitters, such as channel state information (CSI) matrices or partial information regarding the strongest signal paths~\cite{zeng2021toward,wu2021environment}. By leveraging a CKM, one can thus produce an estimate $f(\tilde X)$ of the optimal beam for an unlabeled location $\tilde X$. The synthetic data $(\tilde X, f(\tilde X))$ can augment the labeled dataset $(X,Y)$ of locations $X$ and optimal beam indices $Y$, In this application, PPI and its extensions can be useful to correct the inevitable errors in the CSI provided by a CKM.


As a second application scenario, as shown in Fig.~\ref{app_loc}, we consider received signal strength information (RSSI)-based indoor localization. In RSSI-based positioning, RSSI measurements $X$ at a number of access points (APs) are used to infer position $Y$ of a mobile user~\cite{5290385,6244790}. The accuracy of this positioning technique depends on the size of the available labeled dataset of pairs $(X,Y)$ of RSSI fingerprint $X$ and location $Y$. Additional unlabeled data consisting of RSSI measurements $\tilde X$ was shown to be potentially useful by using semi-supervised learning~\cite{li2023exploiting,yoo2017semi,qian2021supervised}. PPI schemes can be thus beneficial as means to further enhance the reliability of semi-supervised learning in this context.

To summarize, the main contributions of this work are outlined as follows.
\begin{itemize}
    \item We showcase applications of the PPI framework in wireless communication systems for scenarios in which acquiring labeled data is costly, presenting PPI as an alternative to conventional semi-supervised learning. Specifically, we highlight beam alignment via CKM and RSSI-based indoor localization as use cases.
    \item We introduce tuned CPPI, a novel extension of CPPI \cite{zrnic2023cross} that flexibly adapts the degree of reliance on unlabeled data depending on the quality of the ML-based predicted labels.
    \item The performance of the proposed tuned CPPI and existing PPI benchmarks is tested for the mentioned use cases, as well as for mean estimation and linear regression problems. Numerical results demonstrate the superiority of the proposed scheme over all the benchmark schemes, including conventional semi-supervised learning.
    { \item We investigate the integration of tuned CPPI with meta pseudo-labeling~\cite{pham2021meta}, resulting in a new SSL scheme, termed meta-CPPI (MCPPI). This new approach, supporting the joint optimization of labeling models and the parameter of interest, demonstrates superior performance over both tuned CPPI and MPL.}
\end{itemize}

\subsection{Organization}
The rest of the paper is structured as follows. The next section introduces the semi-supervised setting and defines the studied problem. In Section \ref{PPI}, we describe the PPI framework including its three variants, namely PPI, tuned PPI, and CPPI. In Section \ref{tuned_cppi}, the proposed tuned CPPI scheme is introduced, while Section V presents MCPPI. Section \ref{synthetic} provides simulation examples using synthetic data. Two applications to wireless systems, namely beam alignment for mmWave communication and RSSI-based indoor localization, are discussed in Sections \ref{beam_alignment} and \ref{loc_application}, respectively. Concluding remarks are drawn in Section \ref{conclusion}.

\subsection{Notations}
Given a vector $x$, $x^T$ and $x^H$ denote the transpose and conjugate of $x$. The trace of a matrix $X$ is denoted by $\tr(X)$. $|\mathcal{S}|$ is used to denote the cardinality of set $\mathcal{S}$. $\var(X)$ denotes the covariance matrix a random vector $X$, i.e., $\var(X) = \ex [(X-\ex[X])(X-\ex[X])^T]$; while for random vectors $X$ and $Y$, $\cov(X,Y)$ denotes the cross-covariance matrix, i.e., $\cov(X,Y) = \ex [(X-\ex[X])(Y-\ex[Y])^T]$. For a sequence of random variables, $\{X_n\}_{n\in\mathbb{N}}$, we denote by $X_n\pto X$ (resp. $X_n\dto X$) the convergence in probability (resp. in distribution) to the random variable $X$ in the limit $n \to \infty$. 

\section{Problem definition}
\label{problem_def}

Consider a scenario in which acquiring high-quality labeled data is costly, while unlabeled samples are abundant. Specifically, a \emph{labeled} dataset $\mathcal{D} = \left\{(X_i, Y_i)\right\}^n_{i=1}$ of $n$ independent and identically distributed (i.i.d.) samples from an unknown distribution $ {P}_{XY} = {P}_X \times {P}_{Y|X}$ is available, along with an \emph{unlabeled} dataset $\tilde{\mathcal{D} }= \{\tilde X_i\}^N_{i=1}$ of $N$ i.i.d. samples drawn from the marginal distribution ${P}_X $. The unlabeled dataset is typically much larger than the labeled dataset, i.e., $N\gg n$. Examples of this setting include medical applications requiring experts' annotations~\cite{jiao2023learning,gao2021detection} and engineering applications, such as wireless systems, in which labels may require the execution of costly optimizations on real-world data~\cite{simeone2018very,yoo2017semi,camelo2019semi,qian2021supervised}.

Given a \emph{convex} loss function $\ell_\theta(X,Y)$, the objective is to reliably estimate the minimizer $\theta^\star \in \mathbb{R}^d$ of the \emph{population loss} $L(\theta)$, i.e.,
  \begin{align}
\theta^\star = \argmin_\theta L(\theta),\ \ {\rm with} \ \  L(\theta) = \ex[\ell_\theta(X,Y)],
\label{estimand}
\end{align}
where the expected value is taken over the unknown joint distribution ${P}_{XY}$. The formulation in \eqref{estimand} encompasses for instance mean estimation, quantile regression, and generalized linear models.

An unbiased estimator of the population loss in \eqref{estimand} can be obtained by using only the labeled data, yielding the classical \emph{empirical risk minimization} (ERM) estimator  
  \begin{align}
\theta^{\rm ERM} = \argmin_\theta L^{\rm ERM}(\theta), 
\label{classic}
\end{align}
with
\begin{align}
 L^{\rm ERM}(\theta) = \frac{1}{n}\sum_{i=1}^n \ell_\theta(X_i, Y_i).
\label{classic_loss}
\end{align}

However, the ERM solution $\theta^{\rm ERM}$ can have a high variance when the labeled dataset is of limited size. It is thus of interest to leverage also the unlabeled data in order to obtain a better estimate for the population-optimal parameter $\theta^\star$.

To this end, assume the availability of a \emph{model} $f(X)$ that provides an estimate of the label $Y$. The model $f(X)$ can be a pretrained ML model or any other given predictor. Conventional \emph{semi-supervised learning} addresses the problem \cite{books/mit/06/CSZ2006}
  \begin{align}
  \theta^{\rm SS} = \argmin_\theta L^{\rm SS}(\theta),
\label{SS_est}
\end{align}
with 
\begin{align}
L^{\rm SS}(\theta) = \frac{n}{n+N} L^{\rm ERM}(\theta) + \frac{\gamma}{n+N}\sum_{i=1}^N \ell_\theta(\tilde X_i, f(\tilde X_i)),
\label{SS_loss}
\end{align}
where the loss $ L^{\rm SS}(\theta)$ averages over both labeled and unlabeled data, and $\gamma\geq0$ is a hyperparameter dictating the confidence in the synthetic labels  ~\cite{books/mit/06/CSZ2006,nabati2020using,camelo2019semi}. Note that with $\gamma=1$ synthetic and real labels are treated on an equal footing.

 Accordingly, conventional semi-supervised learning optimizes a biased estimate of the population loss in \eqref{estimand}, and the bias may cause significant performance degradation when the model $f(X)$ is not sufficiently accurate~\cite{angelopoulos2023prediction,angelopoulos2023ppi,zrnic2023cross}.


\section{Prediction-Powered Inference}
\label{PPI}
\emph{Prediction-powered inference} (PPI)~\cite{angelopoulos2023prediction} and its variants~\cite{angelopoulos2023ppi,zrnic2023cross} provide principled alternatives to the conventional semi-supervised estimator \eqref{SS_est}, which have been shown to have the desirable theoretical guarantees and empirical performance. This section reviews PPI~\cite{angelopoulos2023prediction}, tuned PPI~\cite{angelopoulos2023ppi}, and cross-PPI~\cite{zrnic2023cross}, providing the necessary background for the introduction of tuned cross-PPI in the next section.

\subsection{Prediction-Powered Inference}
In the setting described in the previous section, PPI uses the labeled data to quantify, and compensate for, the prediction bias of the model $f(X)$ as compared to the ground-truth labels. To this end, PPI addresses the problem~\cite{angelopoulos2023prediction}
\begin{align}
{\theta}^{\rm PP} =  \argmin_\theta L^{\rm PP}(\theta),
\label{ppi}
\end{align}
with cost function
\begin{align}
 L^{\rm PP}(\theta) &=\frac{1}{N}\sum_{i=1}^N \ell_\theta (\tilde X_i,f(\tilde X_i ) )  - \left[\frac{1}{n}\sum_{i=1}^{n}\ell_\theta( X_i,f( X_i ))- L^{\rm ERM}(\theta)\right].
 \label{ppi_loss}
\end{align}
Unlike the conventional estimate $L^{\rm SS}(\theta)$ in \eqref{SS_est}, the loss function $L^{\rm PP}(\theta)$ is an \emph{unbiased} estimate of the population loss $L(\theta)$ in \eqref{estimand}. In fact, the expected values of the first and second terms in \eqref{ppi_loss} are equal, canceling each other. 

The intuition behind the loss function \eqref{ppi_loss} is that the term in the square brackets serves as a \emph{rectifier} for the bias caused by the use of the model $f(X)$ to assign labels in the unlabeled loss $
\sum_{i=1}^N \ell_\theta (\tilde X_i,f(\tilde X_i ) )/N$. In fact, the rectifier term measures the error in the estimate of the loss on the labeled data points. The PPI objective \eqref{ppi_loss} subtracts the estimated error from the unlabeled loss, making the loss estimate $ \eqref{ppi_loss}$ unbiased.

Under the assumption that the loss function $\ell_\theta(X,Y)$ is a convex function of parameter $\theta$, reference \cite{angelopoulos2023prediction} demonstrated that the variance of the estimate $\hat{\theta}^{\rm PP}$ is lower than that of the ERM estimator \eqref{classic} as long as the model $f(X)$ is sufficiently accurate. { Reference \cite{angelopoulos2023prediction} also provided confidence sets for the optimal solution $\theta^\star$ using the PPI estimate \eqref{ppi}, as well as generalizations that address covariate distributional shifts between training and testing data. Extensions of PPI to neural networks are discussed in \cite{NEURIPS2023_819f4269} (see Sec. V).}

\subsection{Tuned Prediction-Powered Inference}
PPI is not guaranteed to improve over ERM when the model $f(X)$ is not sufficiently accurate. To address this issue, reference \cite{angelopoulos2023ppi} proposed \emph{tuned PPI}, a variant of PPI that automatically adapts to the quality of the prediction model $f(X)$. Tuned PPI selects the parameter vector $\theta$ as 
\begin{align}
{\theta}^{\rm PP}_\lambda =  \argmin_\theta L^{\rm PP}_\lambda(\theta),
\label{tppi}
\end{align}
where the cost function is defined as
\begin{align}
 L^{\rm PP}_\lambda(\theta) &= L^{\rm ERM}(\theta) + \lambda \left[\frac{1}{N}\sum_{i=1}^N \ell_\theta (\tilde X_i,f(\tilde X_i ) ) - \frac{1}{n}\sum_{i=1}^n\ell_\theta( X_i,f( X_i ))\right],
 \label{tunedppi_loss}
\end{align}
and $\lambda\in [0,1]$ is a tuning parameter. 

As for PPI, the loss in \eqref{tunedppi_loss} is an unbiased estimate of the population loss $L(\theta)$ for any value $\lambda\in [0,1]$. By varying parameter $\lambda$, tuned PPI ranges from ERM, which is obtained for $\lambda=0 $, to PPI, which is recovered for $\lambda=1$. Reference \cite{angelopoulos2023ppi} proposed a procedure to set the parameter $\lambda$ with the aim of minimizing the variance of the estimate ${\theta}^{\rm PP}_\lambda$. This way, when the predictor $f(X)$ is inaccurate, tuned PPI can revert to the conventional ERM estimator by setting $\lambda = 0$ in \eqref{tunedppi_loss}.



\subsection{Cross-Prediction-Powered Inference}
\label{cp_sec}
PPI and tuned PPI assume the availability of a model $f(X)$. In practice, however, model $f(X)$ may have to be trained using labeled data. Therefore, the available labeled dataset must be shared between the task of obtaining the prediction model $f(X)$ and the task of estimating the parameter vector $\theta$ using \eqref{ppi} or \eqref{tppi}.

\emph{Cross-PPI} (CPPI) addresses this problem via \emph{cross-validation}, enabling the use of the entire labeled dataset for training model $f(X)$, as well as for estimating parameter $\theta$. As in cross-validation, the labeled dataset $\mathcal{D} = \left\{(X_i, Y_i)\right\}^n_{i=1}$ is divided into $K$ \emph{folds}, with the first fold including data points $(X_i, Y_i)$ with indices $i$ in $\mathcal{D}^{(1)} = \{1,\cdots,n/K\}$, the second fold including data points $(X_i, Y_i)$ with indices $i$ in $\mathcal{D}^{(2)} = \{n/K+1,\cdots,2n/K\}$, and so on for the other folds $\mathcal{D}^{(3)},\cdots,\mathcal{D}^{(K)}$. For each $k = 1,\dots,K$, a model $f^{(k)}(X)$ is trained on all folds except for fold $\mathcal{D}^{(k)}$. As detailed next, the predictions of the $K$ models $\{f^{(k)}(X)\}_{k=1}^K$ are used to estimate the parameter vector $\theta^\star$ based on both labeled and unlabeled data. 

Specifically, the cross-PPI estimate is obtained as
\begin{align}
{\theta}^{\rm CP} =  \argmin_\theta L^{\rm CP}(\theta),
\label{cp}
\end{align}
with cost function
\begin{align}
 L^{\rm CP}(\theta) &= \frac{1}{KN}\sum_{k=1}^K\sum_{i=1}^N \ell_\theta (\tilde X_i,f^{(k)}(\tilde X_i ) )- \left[\frac{1}{n}\sum_{k=1}^K\sum_{i \in \mathcal{D}^{(k)}}\ell_\theta( X_i,f^{(k)}( X_i ))-L^{\rm ERM}(\theta)\right].
 \label{cp_loss}
\end{align}
The first term in \eqref{cp_loss} is the empirical loss that uses the predictions of the $K$ models on the unlabeled data, while the second, rectifier, term corrects the bias caused by the use of the trained models $\{f^{(k)}(X)\}_{k=1}^K$ in the first term. In this regard, note that the correction term for model $f^{(k)}(X)$ in \eqref{cp_loss} is obtained by using data in fold $\mathcal{D}^{(k)}$, which is independent of the data model $f^{(k)}(X)$ was trained on. As a result, the CPPI loss $L^{\rm CP}(\theta)$ is an unbiased estimate of the population loss $L^{}(\theta)$. {  It is noted that CPPI, and PPI as a special case, increase the complexity of conventional supervised learning methods by requiring additional training of models $\{f^{(k)}(\cdot)\}_{k=1}^K$. }


\section{Tuned cross-prediction-powered inference}
\label{tuned_cppi}
The quality of the CPPI estimate \eqref{cp} depends on the accuracy of the trained models $\{f^{(k)}(X)\}_{k=1}^K$ in \eqref{cp_loss}. Therefore, when the trained models are not sufficiently accurate, CPPI is not guaranteed to improve over ERM, which uses only labeled data. Inspired by tuned PPI, in this section we introduce \emph{tuned CPPI}, which provides the flexibility to judiciously adapt the use of unlabeled data as a function of the quality of the trained models $\{f^{(k)}(X)\}_{k=1}^K$.


\subsection{Tuned Cross-Prediction-Powered Inference}
In a manner similar to the tuned CPPI loss in \eqref{tcploss}, tuned CPPI introduces a tuning parameter $\lambda $ in the CPPI loss \eqref{cp_loss} so as to determine the degree of reliance on the unlabeled data depending on the quality of the trained models. Specifically, the proposed tuned CPPI estimator is given by
\begin{align}
{\theta}^{\rm CP}_\lambda =  \argmin_\theta L^{\rm CP}_\lambda(\theta),
\label{tcp}
\end{align}
where the loss function is 
\begin{align}
L^{\rm CP}_\lambda(\theta)  & = L^{ERM}(\theta)  + \lambda \left[\frac{1}{KN}\sum_{k=1}^K\sum_{i=1}^N \ell_\theta (\tilde X_i,f^{(k)}(\tilde X_i))  - \frac{1}{n}\sum_{k=1}^K\sum_{i\in \mathcal{D}^{(k)}} \ell_\theta (X_i,f^{(k)}(X_i))\right].
    \label{tcploss}
\end{align}
The tuned CPPI loss $L^{\rm CP}_\lambda(\theta)$ reduces to the CPPI loss in \eqref{cp_loss} when $\lambda=1$ and it recovers the ERM loss \eqref{classic} when $\lambda=0$. 

As for CPPI and ERM, the tuned CPPI loss in \eqref{tcploss} is an unbiased estimate of the population loss $L(\theta)$ for any $\lambda \in [0,1]$, i.e., 
\begin{align}
\ex [L^{\rm CP}_\lambda(\theta)]= L(\theta),
\label{unbiased}
\end{align}
where the average is taken over labeled and unlabeled data. Unlike CPPI, tuned CPPI offers the flexibility to tune the parameter  $\lambda \in [0,1]$ as a function of the quality of the trained models. This is done with the aim of minimizing the mean squared error (MSE) of the estimate ${\theta}^{\rm CP}_\lambda$ in \eqref{tcp}. That is, the parameter $\lambda$ is ideally chosen as the minimizer 
\begin{align}
\lambda^\star = \argmin_{\lambda} \text{MSE}(\lambda), \text{   with   } \text{MSE}(\lambda)= \ex[\norm{{\theta}^{\rm CP}_\lambda - \theta^\star}^2_2],
\label{lam_star}
\end{align}
with average evaluated over labeled and unlabeled data. The MSE in \eqref{lam_star} is estimated in practice from the data, yielding a data-dependent estimate $\hat\lambda_n$ of the ideal tuning parameter $\lambda^\star$. To this end, the next section derives an explicit expression for the optimal parameter $\lambda^\star$, which is then approximated by using data to obtain the estimate $\hat\lambda_n$.

\subsection{Optimal Tuning Parameter}
\label{asymptotics}
In this subsection, we aim at deriving an explicit expression for the optimal tuning parameter $\lambda^\star$ in \eqref{lam_star}.
To start, we note that the trained models $\{f^{(k)}(X)\}_{k=1}^K$ are identically distributed, since they are trained on identically distributed data. Define the average predictor as 
\begin{align}
\bar f(\cdot) = \ex \left[f^{(1)}(\cdot)\right] = \cdots = \ex \left[f^{(K)}(\cdot)\right],
\label{average_predictor}
\end{align}
where the expectation is taken over the labeled data used to train the models $\{f^{(k)}(X)\}_{k=1}^K$. The average model $\bar f(x)$ can be interpreted as the predictor obtained by training many models on independent datasets of size $n-n/K$, and then averaging their predictions. The analysis of the mean squared error $\text{MSE}(\lambda)$ in \eqref{lam_star} is based on the following assumption, which formalizes the property that the gradient of the population loss $\nabla\ell_\theta ( X,f^{(k)}( X))$ does not depend too strongly on the model index $k$ as the number of labeled data points, $n$, increases.

\begin{assumption} For input $X\sim P_X$ independent of model $f^{(k)}(\cdot)$, denote by $\var(\nabla\ell_\theta (X,f^{(k)}( X)) - \nabla\ell_\theta( X,\bar f ( X))|f^{(k)})$ the covariance matrix of the random vector $\nabla\ell_\theta (X,f^{(k)}( X)) - \nabla\ell_\theta( X,\bar f ( X))$ for a fixed model $f^{(k)}(\cdot)$. The square root of the entries of $\var(\nabla\ell_\theta (X,f^{(k)}( X)) - \nabla\ell_\theta( X,\bar f ( X))|f^{(k)})$ are assumed to converge in mean to zero, i.e.,
\begin{align}
&\ex\left[\!\sqrt{\var\left( \nabla\ell_\theta ( X,f^{(k)}( X))\!-\! \nabla\ell_\theta( X,\bar f ( X)) |f^{(k)}\!\right)}\! \right]\!\underset{n\to \infty}\longrightarrow 0,\nonumber\\ & \text{for any } k\in \{1,\cdots,K\},
\label{assump_eq}
\end{align}
where the outer expectation is taken over the distribution of the model $f^{(k)}(\cdot)$. The square root and the convergence in \eqref{assump_eq} are applied entry-wise.
\label{assump2}
\end{assumption}

Generalizing\cite[Theorem 2]{zrnic2023cross} and \cite[Theorem 1]{angelopoulos2023ppi}, the following theorem establishes the asymptotic normality of the tuned CPPI estimator $\hat\theta^{\rm CP}_{\lambda}$ in \eqref{tcp}. This result will be then used to evaluate the optimal tuning parameter $\lambda^\star$ in \eqref{lam_star}. 
To simplify notations, we define the  Hessian of the population loss as $H_\theta =\nabla^2L(\theta)$; we let $\nabla\ell_\theta = \nabla\ell_\theta (X,Y) $ be the gradient of the loss on a labeled data point $(X,Y)$; and we write $\nabla\ell_\theta^{\bar f} = \nabla\ell_\theta (X,\bar{f}(X)) $ for the gradient of the loss on a data point $X$ with label assigned by the average model $\bar f(\cdot)$ in \eqref{average_predictor}. Furthermore, we denote as
\begin{align}
V_{\bar f, \theta^\star}^\lambda  = \lambda^2\var\left( \nabla\ell_{\theta^\star}^{\bar f}\right)
\label{vf}
\end{align}
the covariance matrix of the gradient $\nabla\ell_\theta^{\bar f}$, and as 
\begin{align}
V_{\Delta, \theta^\star}^\lambda  = \var\left( \nabla\ell_{\theta^\star}^{} - \lambda \nabla\ell_{\theta^\star}^{\bar f}\right)
\label{vdelta}
\end{align}
the covariance matrix of $\nabla\ell_{\theta^\star}^{} - \lambda \nabla\ell_{\theta^\star}^{\bar f}$. 


\begin{theorem} For $n\to\infty$ with $n/N=r$, assume that  the estimate $\hat\lambda_n $ converges to some value $\lambda$, i.e., $\hat\lambda_n \pto \lambda$, and that the corresponding parameter $\theta^{\rm CP}_{\hat\lambda_n}$ in \eqref{tcp} converges to the optimal value $\theta^\star$, i.e., $\theta^{\rm CP}_{\hat\lambda_n} \pto \theta^\star$. Then, we have the limit
\begin{align}
    \sqrt{n}(\theta^{\rm CP}_{\hat\lambda_n} - \theta^\star) \underset{n \to \infty}{\dto }\mathcal{N}(0, \Sigma_\lambda),
    \label{main_result}
\end{align} 
with covariance matrix
\begin{align}
 \Sigma_\lambda = H_{\theta^\star}^{-1} \left( r \cdot V_{\bar f, \theta^\star}^\lambda +  V_{\Delta, \theta^\star}^\lambda  \right) H_{\theta^\star}^{-1}. 
    \label{sigma}
\end{align} 
\label{thm1}
\end{theorem}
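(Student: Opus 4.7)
The plan is to follow the standard M-estimator asymptotic argument, combining the cross-fitting step from the proof of Theorem 2 in \cite{zrnic2023cross} with the tuning argument of Theorem 1 in \cite{angelopoulos2023ppi}. By hypothesis, $\theta^{\rm CP}_{\hat\lambda_n} \pto \theta^\star$ and $\hat\lambda_n \pto \lambda$, so it suffices to linearize the first-order condition $\nabla L^{\rm CP}_{\hat\lambda_n}(\theta^{\rm CP}_{\hat\lambda_n}) = 0$ around $\theta^\star$. A Taylor expansion yields
\begin{align*}
\sqrt{n}\bigl(\theta^{\rm CP}_{\hat\lambda_n} - \theta^\star\bigr) = -\bigl[\nabla^2 L^{\rm CP}_{\hat\lambda_n}(\tilde\theta)\bigr]^{-1}\sqrt{n}\,\nabla L^{\rm CP}_{\hat\lambda_n}(\theta^\star)
\end{align*}
for some intermediate $\tilde\theta$ on the segment between $\theta^{\rm CP}_{\hat\lambda_n}$ and $\theta^\star$. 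Because $X_i$ and $\tilde X_i$ are identically distributed, the Hessian contributions of the rectifier in \eqref{tcploss} cancel in expectation, so the law of large numbers and continuity of $\nabla^2 \ell_\theta$ give $\nabla^2 L^{\rm CP}_{\hat\lambda_n}(\tilde\theta) \pto H_{\theta^\star}$ for any $\hat\lambda_n \in [0,1]$.

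The main work is identifying the limiting distribution of the gradient at $\theta^\star$. I would decompose
\begin{align*}
\nabla L^{\rm CP}_\lambda(\theta^\star) = A_n + \lambda\,(B_N - C_n),
\end{align*}
with $A_n = \tfrac{1}{n}\sum_i \nabla\ell_{\theta^\star}(X_i,Y_i)$, $C_n = \tfrac{1}{n}\sum_k\sum_{i\in\mathcal{D}^{(k)}} \nabla\ell_{\theta^\star}(X_i, f^{(k)}(X_i))$, and $B_N = \tfrac{1}{KN}\sum_k\sum_i \nabla\ell_{\theta^\star}(\tilde X_i, f^{(k)}(\tilde X_i))$. The next step is to replace each cross-fitted model $f^{(k)}(\cdot)$ by the deterministic average predictor $\bar f(\cdot)$ of \eqref{average_predictor}, obtaining surrogates $\tilde B_N$ and $\tilde C_n$. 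Assumption \ref{assump2} implies that the conditional variance (given $f^{(k)}$) of each summand of $B_N-\tilde B_N$ and $C_n-\tilde C_n$ vanishes in mean over $f^{(k)}$, and the cross-fitting construction ensures that, within each fold, the labeled samples in $\mathcal{D}^{(k)}$ are independent of $f^{(k)}$. Combining these two facts shows $B_N - \tilde B_N = o_p(n^{-1/2})$ and $C_n - \tilde C_n = o_p(n^{-1/2})$.

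Once this replacement is in place, the labeled piece $A_n - \lambda \tilde C_n$ becomes an i.i.d. average of mean-zero vectors $\nabla\ell_{\theta^\star}(X_i,Y_i) - \lambda\nabla\ell_{\theta^\star}(X_i,\bar f(X_i)) + \lambda \mu$, with $\mu = \ex[\nabla\ell_{\theta^\star}^{\bar f}]$ and per-sample covariance $V_{\Delta,\theta^\star}^\lambda$ by the definition \eqref{vdelta}; the central limit theorem then gives $\sqrt{n}\,(A_n - \lambda\tilde C_n + \lambda\mu) \dto \mathcal{N}(0, V_{\Delta,\theta^\star}^\lambda)$. Independently, the unlabeled piece satisfies $\sqrt{n}\,(\lambda\tilde B_N - \lambda\mu) \dto \mathcal{N}(0, r\,V_{\bar f,\theta^\star}^\lambda)$, since $n/N \to r$ and the per-sample covariance is $V_{\bar f,\theta^\star}^\lambda$ by \eqref{vf}. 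Because the labeled and unlabeled samples are independent, the two limits add and the centering terms cancel, yielding $\sqrt{n}\,\nabla L^{\rm CP}_\lambda(\theta^\star) \dto \mathcal{N}(0, V_{\Delta,\theta^\star}^\lambda + r\,V_{\bar f,\theta^\star}^\lambda)$. Slutsky's theorem, applied with the Hessian limit and $\hat\lambda_n \pto \lambda$, then delivers the stated covariance $\Sigma_\lambda = H_{\theta^\star}^{-1}(r\,V_{\bar f,\theta^\star}^\lambda + V_{\Delta,\theta^\star}^\lambda)H_{\theta^\star}^{-1}$.

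The main obstacle is the cross-fitting replacement step: the $K$ trained models are mutually dependent (they share labeled data across folds) and depend on $n$, so one cannot simply condition on a single fixed $f^{(k)}$. The argument must show that the conditional means over $\{f^{(k)}\}$ inside $B_N - C_n$ cancel at rate $o_p(n^{-1/2})$, leveraging Assumption \ref{assump2} together with the fold-wise independence granted by cross-fitting. This generalizes Lemma 3 of \cite{zrnic2023cross} from the untuned setting; once it is in place, incorporating the tuning scalar $\lambda$ is straightforward because $\lambda$ enters only as a linear weight on $B_N - C_n$, so the joint CLT collapses to the stated covariance with the weight absorbed into the definitions \eqref{vf}--\eqref{vdelta}.
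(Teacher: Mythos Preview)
Your proposal is correct and follows essentially the same route that the paper points to: it omits the proof, stating only that it combines the techniques of \cite[Theorem~2]{zrnic2023cross} and \cite[Theorem~1]{angelopoulos2023ppi}, and your sketch is precisely that combination---Taylor-expand the estimating equation, show the Hessian converges to $H_{\theta^\star}$, use Assumption~\ref{assump2} plus cross-fitting to replace each $f^{(k)}$ by $\bar f$ at rate $o_p(n^{-1/2})$, apply the CLT separately to the independent labeled and unlabeled pieces, and close with Slutsky for $\hat\lambda_n\pto\lambda$. The one place to be explicit when writing it up is the last step: since $\lambda$ enters the gradient linearly, $\sqrt{n}\,\nabla L^{\rm CP}_{\hat\lambda_n}(\theta^\star)-\sqrt{n}\,\nabla L^{\rm CP}_{\lambda}(\theta^\star)=(\hat\lambda_n-\lambda)\,\sqrt{n}(B_N-C_n)=o_p(1)\cdot O_p(1)$, which is exactly what Slutsky requires.
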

\begin{proof}The proof follows the same techniques used in \cite[Theorem 2]{zrnic2023cross} and \cite[Theorem 1]{angelopoulos2023ppi}, and is thus omitted.
\end{proof}
We note that for a fixed $\lambda\in[0,1]$, the consistency $\theta^{\rm CP}_{\lambda} \pto \theta^\star$ holds if the loss function $L^{\rm CP}_\lambda(\theta)$ is convex in $\theta$ or the parameter space is compact\cite{angelopoulos2023prediction,angelopoulos2023ppi,zrnic2023cross,van2000asymptotic}. For instance, the convexity of loss $L^{\rm CP}_\lambda(\theta)$ holds for all generalized linear models. We refer the reader to \cite{angelopoulos2023ppi,van2000asymptotic} for more discussion.

By \eqref{main_result}, the asymptotic mean squared error, which is proportional to $\tr(\Sigma_\lambda)$, depends on the curvature of the population loss around the optimal value $\theta^\star$ via the Hessian $H_{\theta^\star}$; on the inherent variability of the average predictor $\bar f(X)$ via the term $V_{\bar f, \theta^\star}^\lambda$ in \eqref{vf}; and on the accuracy of the average predictor $\bar f(X)$ via the term $V_{\Delta, \theta^\star}^\lambda$ in \eqref{vdelta}.

With this result at hand, the optimal tuning parameter $\lambda^\star$ in \eqref{lam_star} can be evaluated, in the limit $n\to \infty$, as
\begin{align}
&\lambda^\star = \argmin_{\lambda} \tr (\Sigma_\lambda) \nonumber\\&= \frac{\tr\left(H_{\theta^\star}^{-1}\!\left( \cov(\nabla\ell_{\theta^\star},\nabla\ell^{\bar f}_{\theta^\star})\!+\!\cov(\nabla\ell^{\bar f}_{\theta^\star},\nabla\ell_{\theta^\star}) \right)\!H_{\theta^\star}^{-1}\right)}{2(1+r)\tr\left(H_{\theta^\star}^{-1}  \var(\nabla\ell^{\bar f}_{\theta^\star})H_{\theta^\star}^{-1}\right)},
\label{opt_lam}
\end{align}
where  $\cov(\nabla\ell_{\theta^\star},\nabla\ell^{\bar f}_{\theta^\star})$ represent the  cross-covariance matrix of vectors $\nabla\ell_{\theta^\star}$ and $\nabla\ell^{\bar f}_{\theta^\star}$ with respect to random variables $(X,Y)\sim P_{XY}$; and $\var(\nabla\ell^{\bar f}_{\theta^\star})$ is the covariance matrix of vector $\nabla\ell^{\bar f}_{\theta^\star}$ with respect to random variable $X\sim P_X$. To estimate the optimal parameter $\lambda^\star$ in practice,  we resort to bootstrapping techniques, as done in~\cite{zrnic2023cross} for estimating the variance of the CPPI estimator. This is described next.



\subsection{Estimating the Optimal Tuning Parameter}
The optimized tuning parameter $\lambda^\star$ in \eqref{opt_lam} requires the evaluation of the Hessian $H_{\theta^\star}$, of the covariance matrix $\var(\nabla\ell^{\bar f}_{\theta^\star})$ of vector $\nabla\ell^{\bar f}_{\theta^\star}$, and of the cross-covariance matrix $\cov(\nabla\ell_{\theta^\star},\nabla\ell^{\bar f}_{\theta^\star})$ between vectors $\nabla\ell_{\theta^\star}$ and $\nabla\ell^{\bar f}_{\theta^\star}$. These quantities depend on the optimal parameter vector $\theta^\star$ and on the average predictor $\bar{f}(X)$ in \eqref{average_predictor}, and they need to be estimated from data. To this end, as in~\cite{angelopoulos2023ppi}, we first fix an arbitrary value $\lambda\in [0,1]$ to obtain an estimate of the optimal parameter vector $\theta^\star$ via the solution $\theta^{\rm CP}_\lambda$ in \eqref{tcp}. This estimate, which we denote as $\hat\theta$, is consistent for convex loss functions $L^{\rm CP}_\lambda(\theta)$, as discussed in Sec. \ref{asymptotics}.
The Hessian $H_{\theta^\star}$ is then estimated by using the empirical estimate
\begin{align}
\hat H_{\hat\theta} = \frac{1}{n}\sum_{i=1}^n \nabla^2 \ell_{\hat\theta^{}_{}}(X_i,Y_i),
\label{hessian_est}
\end{align}
 obtained from the labeled data. 
 
Then, as in~\cite{zrnic2023cross}, we apply bootstrapping to simulate several runs of the training process, obtaining an estimate of the average predictor $\bar f(X)$. The estimates of $\var(\nabla\ell^{\bar f}_{\theta^\star})$ and $\cov(\nabla\ell_{\theta^\star},\nabla\ell^{\bar f}_{\theta^\star})$, denoted by $\widehat{\var}(\nabla\ell^{\bar f}_{\theta^\star})$ and $\widehat{\cov}(\nabla\ell_{\theta^\star},\nabla\ell^{\bar f}_{\theta^\star})$, respectively, are obtained by using empirical estimates from labeled and unlabeled data as detailed in Appendix A.

Using the estimates $\hat H_{\hat\theta}$, $\widehat{\var}(\nabla\ell^{\bar f}_{\hat\theta}) $, and $\widehat{\cov}(\nabla\ell_{\hat\theta},\nabla\ell^{\bar f}_{\hat\theta})$, a plug-in estimator of the optimal parameter $\lambda^\star$ in \eqref{opt_lam} is finally obtained as
\begin{align}
\hat\lambda = \frac{\tr\left(\hat H_{\hat\theta}^{-1} \left( \widehat{\cov}(\nabla\ell_{\hat\theta^{}},\nabla\ell^{\bar f}_{\hat\theta})+\widehat{\cov}(\nabla\ell^{\bar f}_{\hat\theta},\nabla\ell_{\hat\theta}) \right) \hat H_{\hat\theta}^{-1}\right)}{2(1+n/N)\tr\left(\hat H_{\hat\theta}^{-1}  \widehat{\var}(\nabla\ell^{\bar f}_{\hat\theta^{}})\hat H_{\hat\theta^{}}^{-1}\right)}.
\label{opt_lam_hat}
\end{align}

The optimized parameter $\hat\lambda$ in \eqref{opt_lam_hat} can fall outside the interval $[0,1]$. To solve this issue, one can clip its value back to $[0,1]$ or apply one-step estimators \cite{van2000asymptotic,angelopoulos2023ppi}. We adopt the first option in our experiments.

Overall, obtaining the estimate $\hat\lambda$ in \eqref{opt_lam_hat} involves two main steps. First, an estimate $\hat\theta$ of the optimal parameter $\theta^\star$ is obtained by addressing problem \eqref{tcp} for a fixed $\lambda\in [0,1]$, and parameter $\hat \lambda$ is computed using \eqref{opt_lam_hat}. Then, the tuned CPPI estimator is obtained by addressing problem \eqref{tcp} using $\lambda =\hat \lambda$. Algorithm \ref{alg} outlines the key steps of the proposed scheme.

Finally, we note that for the case of mean estimation, where the objective is to estimate parameter $\theta = \ex[Y]$ and the loss function in \eqref{estimand} corresponds to $\ell_\theta(X,Y) = (Y-\theta)^2$, the optimal tuning parameter $\lambda^\star$ in \eqref{opt_lam} has the simplified expression \cite{angelopoulos2023ppi}
\begin{align}
\lambda^\star = \frac{\cov(Y, \bar f(X))}{(1+r) \var(\bar f(X))},
    \label{opt_lam_mean}
\end{align}
which can be estimated in a manner analogous to \eqref{opt_lam_hat}.

\begin{algorithm}[!t]
 \SetAlgoLined
    \KwIn{Labeled dataset $\mathcal{D}$,  unlabeled dataset $\tilde{\mathcal{D}}$, number of prediction models $K$.}
    \KwOut{Tuned CPPI parameter estimate.}
1. Use the labeled data to train the prediction models $\{f^{(k)}(X)\}^K_{k=1}$ using the procedure described in Section \ref{cp_sec}.\\
Fix $\lambda_1\in[0,1]$.\\
2. Compute the tuned CPPI loss in \eqref{tcploss} for $\lambda = \lambda_1$ using the labeled and unlabeled data and solve \eqref{tcp} to obtain $\hat\theta = \theta^{\rm CP}_{\lambda_1}$.\\
3. Compute the Hessian estimate $\hat H_{\hat\theta} $ using \eqref{hessian_est}.\\
4. Compute $\widehat{\var}(\nabla\ell^{\bar f}_{\theta^\star})$ and $\widehat{\cov}(\nabla\ell_{\theta^\star},\nabla\ell^{\bar f}_{\theta^\star})$ using \eqref{bootstrap_cov_unlabeled} and \eqref{bootstrap_cross_cov}, respectively.\\
5. Obtain $\hat\lambda_n$ using \eqref{opt_lam_hat}.\\
6. Solve $\eqref{tcp}$ again for $\lambda = \hat\lambda_n$ to obtain $\theta^{\rm CP}_{\hat\lambda_n}$.
\caption{Tuned CPPI}
\label{alg}  \vspace{0mm}
\end{algorithm}

{ 
\section{Cross-Prediction Powered Inference in Practice}
\label{CPPI_nn}

This section first discusses the practical implementation of PPI variants for neural network models optimized using stochastic gradient descent (SGD) methods. Then, we present the integration of the proposed tuned CPPI with MPL~\cite{pham2021meta}, a strong SSL benchmark, resulting in a scheme potentially outperforming both tuned CPPI and MPL.
\subsection{PPI with Neural Networks}
\label{PPI_nn}
The recent work~\cite{NEURIPS2023_819f4269} has investigated the benefits of PPI when applied to neural networks. The approach consists of optimizing a slightly modified version of the PPI loss adapted to SGD algorithms. Specifically, at every gradient step $t$, the following loss is considered
\begin{align}
 L^{\rm PP}_t(\theta_t) &= \ex_{\tilde X \sim \tilde{\mathcal{D}}} \left[\ell_{\theta_t} (\tilde X,f(X) )\right]  -\kappa_t \ex_{ (X,Y) \sim {\mathcal{D}}} \left[\ell_{\theta_t}( X,f( X ))-  \ell_{\theta_t}( X,Y)\right],
 \label{ppi_loss_nn}
\end{align}
where the expectation $\ex_{\tilde X \sim \tilde{\mathcal{D}}}[\cdot]$ is estimated using a mini-batch drawn from the unlabeled dataset $ \tilde{\mathcal{D}}$, while the expectation $\ex_{ (X,Y) \sim {\mathcal{D}}} [\cdot]$ is estimated using a mini-batch from the labeled dataset $\mathcal{D}$. The scalar hyperparameter $k_t$ allows for the gradual introduction of the bias term in the training process in order to avoid numerical instabilities~\cite{NEURIPS2023_819f4269}. Choices for hyperparameter $k_t$ suggested in~\cite{NEURIPS2023_819f4269} are $t/T$ and $t^2/T^2$, with $T$ being the total number of gradient steps. 

The same type of loss can be applied to extend tuned CPPI to neural networks yielding the following loss at step $t$
\begin{align}
 & L^{\rm CP}_{\lambda, t}(\theta_t)  =   \lambda\sum_{k=1}^K \ex_{\tilde X \sim \tilde{\mathcal{D}}} \left[\ell_{\theta_t} (\tilde X,f^{(k)}(\tilde X)) \right] - {\kappa_t}\sum_{k=1}^K \ex_{ (X, Y) \sim {\mathcal{D}}^{(k)}} \left[\lambda  \ell_{\theta_t} (X,f^{(k)}(X)) -\ell_{\theta_t}( X,Y)\right],
    \label{tcploss_nn}
\end{align}
where the expectation $\ex_{ (X, Y) \sim {\mathcal{D}}^{(k)}}$ is estimated using a mini-batch from ${\mathcal{D}}^{(k)}$.

\subsection{Meta Pseudo-Labeling}
\label{CPPI_mpl}
MPL~\cite{pham2021meta} is a strong SSL benchmark that borrows ideas from meta-learning~\cite{finn2017model}. Unlike the classical pseudo-labeling approach~\cite{lee2013pseudo} where a fixed pre-trained ``teacher'' model $f(\cdot)$ generates pseudo-labels for a ``student'' model $S(\cdot)$ to learn from, MPL trains the teacher model $f(\cdot)$ along with the student model. A feedback term on the performance of the student model on the labeled data is added to the training loss of the teacher model $f(\cdot)$ to improve the quality of the generated pseudo-labels. 

We start by introducing MPL, and then we propose a new SSL scheme combining tuned CPPI and MPL. We denote the student model explicitly as $S_\theta(\cdot)$, with $\theta$ being its parameter vector as in the rest of the paper. Furthermore, the teacher model is denoted by $f_\phi(\cdot)$, highlighting its dependence on the parameter vector $\phi$. 

The student model parameter $\theta$ is trained to minimize the loss on the unlabeled dataset $\tilde{\mathcal{D}}$ with pseudo-labels generated by the teacher model $f_\phi(\cdot)$, that is,
\begin{align}
L_{}(\theta, \phi) = \ex_{\tilde X \sim \tilde{\mathcal{D}}} \left[\ell_\theta(\tilde X, f_\phi(\tilde X))\right],
\label{mpl_student_loss}
\end{align}
with expectation estimated as in \eqref{ppi_loss_nn}. 

For the teacher model objective, MPL introduces a new term accounting for the feedback of the student model on the quality of the pseudo-labels. Specifically, the teacher model objective is  
\begin{align}
\mathcal{L}(\phi) = \ex_{(X,Y) \sim \mathcal{D}} \left[\ell_\phi( X, Y)\right] + \ex_{(X,Y) \sim \mathcal{D}} \left[\ell_{\theta^\star(\phi)}( X, Y)\right],
\label{mpl_teacher_loss}
\end{align}
with
\begin{align}
\theta^\star(\phi) = \argmin_{\theta} L_{}(\theta, \phi),
\label{mpl_theta_star}
\end{align}
being the optimized student parameter given the teacher model parameter $\phi$. Note that the first term in \eqref{mpl_teacher_loss} is the classical objective of the teacher model on the labeled data as in ERM, while the second term accounts for the performance of the student on the labeled data. To simplify the optimization of \eqref{mpl_teacher_loss}, in a manner similar to MAML \cite{finn2017model}, the optimal parameter vector $\theta^\star(\phi)$ in \eqref{mpl_theta_star} is approximated with one step gradient update as
\begin{align}
\theta^\star(\phi) \leftarrow\theta - \eta_S\nabla_{\theta}L_{}(\theta, \phi),
\label{mpl_theta_approx}
\end{align}
where $\theta$ is the current iterate and $\eta_S$ is the learning rate of the student model.

At each iteration, the MPL algorithm performs a gradient step on the student model using \eqref{mpl_theta_approx} and then a gradient step on the teacher model using the objective \eqref{mpl_teacher_loss}. Computing the gradient of \eqref{mpl_teacher_loss} with respect to $\phi$ requires additional derivations as the second term depends on $\theta'(\phi)$. We refer the reader to the original reference~\cite{pham2021meta} for details.

\subsection{Meta-CPPI}
Inspired by MPL, we propose a new scheme that combines MPL with tuned CPPI. Tuned CPPI works in a similar fashion to classical pseudo-labeling where the teacher models $\{f^{(k)}(\cdot)\}_{k=1}^K$ are trained once, and they are kept fixed when optimizing the model parameter $\theta$. In the proposed meta-CPPI (MCPPI), these models are trained jointly with the student model $S_\theta(\cdot)$, while using the tuned CPPI loss in \eqref{tcp} as an objective for the student model. 

To elaborate, we denote the $K$ teacher models as $\{f^{(k)}_{\phi^k}(\cdot)\}_{k=1}^K$, with ${\phi^k}$ being the parameter vector of model $k$. The student model $S_\theta(\cdot)$ is trained based on the pseudo-labels generated by the teachers $\{f^{(k)}_{\phi^k}(\cdot)\}_{k=1}^K$ using the loss
\begin{align}
&L_{\lambda}^{\rm MCP}(\theta, \{\phi^k\}_{k=1}^K) =\lambda \sum_{k=1}^K\ex_{\tilde X \sim \tilde{\mathcal{D}}} \left[\ell_\theta(\tilde X, f^{(k)}_{\phi^k}(\tilde X))\right]
 - \sum_{k=1}^K\ex_{ (X,Y) \sim {\mathcal{D}}^{(k)}} \left[\lambda \ell_\theta( X, f^{(k)}_{\phi^k}(X)) - \ell_\theta( X, Y)\right].
\label{metacppi_student_loss}
\end{align}
Furthermore, each teacher model $k$ is trained to minimize the loss
\begin{align}
\mathcal{L}_{}^k(\phi^k) = \ex_{(X,Y) \sim \mathcal{D}} \left[\ell_{\phi^{k}}( X, Y) + \ell_{\theta^\star(\{\phi^k\}_{k=1}^K)}( X,Y)\right],
\label{metacppi_teacher_loss_approx}
\end{align}
where, as in MPL, the optimal student model parameters $\theta^\star(\{\phi^k\}_{k=1}^K)$ are approximated with one gradient update as 
\begin{align}
\theta^\star(\{\phi^k\}_{k=1}^K) \leftarrow \theta - \eta_S\nabla_{\theta}L_{ \lambda}^{\rm MCP}(\theta, \{\phi^k\}_{k=1}^K).
\label{metacppi_theta_approx}
\end{align}
Computing the gradient of the second term in \eqref{metacppi_teacher_loss_approx} with respect to $\phi^k$ requires additional derivations, which are provided in Appendix B along with the pseudo-code for the proposed MCPPI algorithm.}




\section{Mean Estimation and Linear Regression}
\label{synthetic}
In this section, we report the results of an experiment using synthetic data to provide a first performance comparison across all PPI schemes. Specifically, we consider two variants of the estimation problem in \eqref{estimand}, namely mean estimation, corresponding to the loss function $\ell_\theta(X,Y) = (Y-\theta)^2$ with $Y\in\mathbb{R}$ and $\theta\in\mathbb{R}$, and linear regression coefficients estimation, corresponding to the loss function $\ell_\theta(X,Y) = (Y- X^T\theta)^2$ with $X\in\mathbb{R}^d$, $Y\in\mathbb{R}$, and $\theta\in\mathbb{R}^d$. In all experiments in this section, we fix the size of the unlabeled dataset to $N=10000$ samples, {  and we vary the size of the labeled dataset $n$.}

We consider the following model for data generation as in~\cite{zrnic2023cross}
\begin{align}
Y = \mu + X^T\beta+z,
\label{data_gen_model}
\end{align}
where $\mu\in\mathbb{R}$ is a fixed constant, $X \in\mathbb{R}^d$ is distributed as $X\sim\mathcal{N}(0,I_d)$, $\beta = \frac{R\sigma}{\sqrt{2}}[1,\cdots,1]^T\in\mathbb{R}^d$ for some given positive constants $R$ and $\sigma$, and $z\sim\mathcal{N}(0,\sigma^2(1-R^2))$ is independent of $X$. Parameters $\mu$ and $\sigma$ are fixed in all experiments, while parameter $R$ is varied in the interval $[0,1]$. Parameter $R$ dictates the degree to which the output variable $Y$ can be explained through the feature vector $X$. In particular, when $R=0$, the label $Y$ is independent of $X$, while, when $R=1$, the label $Y$ is a deterministic function of $X$.

For the CPPI and tuned CPPI schemes, $K=5$ models are trained by following the cross-validation procedure in Sec. \ref{cp_sec}, where each model $f^{(k)}(X)$ is a random forest regression model~\cite{ho1995random}. For the PPI and tuned PPI schemes, as in~\cite{zrnic2023cross}, half of the labeled data is used to train a random forest regression model $f(X)$, and the rest is used for estimating $\theta$. The MSE between the true parameter vector and its estimate is used as a performance metric. We compare the performance of the proposed tuned CPPI scheme with the benchmark schemes ERM, SS (with $\gamma=1$), PPI, tuned PPI, and CPPI described in Sec. \ref{problem_def} and in Sec. \ref{PPI}.

\subsection{Mean Estimation}

\begin{figure*}[t]
\begin{center}
\subfigure[$R^2=0.25$]  
{  
\begin{tikzpicture}[scale=0.6]
\begin{axis}[
tick align=outside,
tick pos=left,
x grid style={white!69.0196078431373!black},
xlabel={Size of the labeled dataset $n$},
xmajorgrids,
xmin= 100, xmax=400,
xtick style={color=black},
y grid style={white!69.0196078431373!black},
ylabel={Mean squared error},
ymajorgrids,
ymin=0.009, ymax=0.04,
ytick style={color=black},
grid=major,
scaled ticks = true,
legend pos = south west,
legend style={nodes={scale=0.6, transform shape}},
grid style=densely dashed,
]

\addplot  [thick, color = darkgray, mark = pentagon*, mark size = 2, mark repeat = 0, mark phase = 0]
coordinates {
(100, 0.03831341298891767)(200, 0.019451322614850382)(300, 0.012559468263247395)(400, 0.010986593031607171)(500, 0.00929962971583781)
}; \addlegendentry{ERM}
\addplot  [thick, color = green, mark = none, mark size = 2, mark repeat = 0, mark phase = 0]
coordinates {
(100, 0.11227832749034335)(200, 0.04899507535426225)(300, 0.041319619707278425)(400, 0.027361049624437465)(500, 0.027731976246587246)
}; \addlegendentry{SS}
\addplot  [thick, color = violet, dashed , mark =none,  mark size = 2, mark repeat = 0, mark phase = 0]
coordinates {
(100, 0.09366561175627929)(200, 0.04597432617670799)(300, 0.03027900174564828)(400, 0.02647758124383778)(500, 0.017441996458604187)
}; \addlegendentry{PPI}
\addplot  [thick, color = violet,  mark = triangle,  mark size = 2, mark repeat = 0, mark phase = 0]
coordinates {
(100, 0.07196455334996854)(200, 0.03514447402656854)(300, 0.02434946702686502)(400, 0.02051716263406296)(500, 0.015206943710071537)
}; \addlegendentry{Tuned PPI}
\addplot  [thick, color =blue, dashed , mark = none,  mark size = 2, mark repeat = 0, mark phase = 0]
coordinates {
(100, 0.04647781287049195)(200, 0.02504665641589281)(300, 0.014646878126809926)(400, 0.011818131202139658)(500, 0.009214057349513326)
}; \addlegendentry{CPPI}
\addplot  [thick, color = blue, mark = square, mark size = 2, mark repeat = 0, mark phase = 0]
coordinates {
(100, 0.03767071955282291)(200, 0.02079378114242821)(300, 0.012253068330832893)(400, 0.010047615173936193)(500, 0.008129060743781013)
}; \addlegendentry{Tuned CPPI}
\end{axis}
\end{tikzpicture}}
\subfigure[$R^2=0.4$]  
{  
\begin{tikzpicture}[scale=0.6]
\begin{axis}[
tick align=outside,
tick pos=left,
x grid style={white!69.0196078431373!black},
xlabel={Size of the labeled dataset $n$},
xmajorgrids,
xmin= 100, xmax=400,
xtick style={color=black},
y grid style={white!69.0196078431373!black},
ylabel={Mean squared error},
ymajorgrids,
ymin=0.009, ymax=0.04,
ytick style={color=black},
grid=major,
scaled ticks = true,
legend pos = north east,
legend style={nodes={scale=0.6, transform shape}},
grid style=densely dashed,
]

\addplot  [thick, color = darkgray, mark = pentagon*, mark size = 2, mark repeat = 0, mark phase = 0]
coordinates {
(100, 0.038348589406394615)(200, 0.01994742299135502)(300, 0.012712656088921534)(400, 0.010970719147244698)(500, 0.009145579729673546)
}; \addlegendentry{ERM}
\addplot  [thick, color = green, mark = none, mark size = 2, mark repeat = 0, mark phase = 0]
coordinates {
(100, 0.09087088094799976)(200, 0.03861575891398912)(300, 0.027644116362673762)(400, 0.017901437936340494)(500, 0.01562645689642788)
}; \addlegendentry{SS}
\addplot  [thick, color = violet, dashed , mark =none,  mark size = 2, mark repeat = 0, mark phase = 0]
coordinates {
(100, 0.07398776829010201)(200, 0.03669532190225189)(300, 0.025303482439750054)(400, 0.021763900044000993)(500, 0.013866674156056554)
}; \addlegendentry{PPI}
\addplot  [thick, color = violet,  mark = triangle,  mark size = 2, mark repeat = 0, mark phase = 0]
coordinates {
(100, 0.06330738197738899)(200, 0.031492880018611086)(300, 0.02198808994960744)(400, 0.01839061131910692)(500, 0.012786317325964606)
}; \addlegendentry{Tuned PPI}
\addplot  [thick, color =blue, dashed , mark = none,  mark size = 2, mark repeat = 0, mark phase = 0]
coordinates {
(100, 0.03775877183499559)(200, 0.019892473495182358)(300, 0.012148130989899485)(400, 0.009724586383736229)(500, 0.007450499329942499)
}; \addlegendentry{CPPI}
\addplot  [thick, color = blue, mark = square, mark size = 2, mark repeat = 0, mark phase = 0]
coordinates {
(100, 0.03402773588913957)(200, 0.018103817717367184)(300, 0.01098697150934958)(400, 0.00884229242286316)(500, 0.006822745227370164)
}; \addlegendentry{Tuned CPPI}
\end{axis}
\end{tikzpicture}}
\subfigure[$R^2=0.75$]  
{  
\begin{tikzpicture}[scale=0.6]
\begin{axis}[
tick align=outside,
tick pos=left,
x grid style={white!69.0196078431373!black},
xlabel={Size of the labeled dataset $n$},
xmajorgrids,
xmin= 100, xmax=400,
xtick style={color=black},
y grid style={white!69.0196078431373!black},
ylabel={Mean squared error},
ymajorgrids,
ymin=0.002, ymax=0.04,
ytick style={color=black},
grid=major,
scaled ticks = true,
legend pos = north east,
legend style={nodes={scale=0.6, transform shape}},
grid style=densely dashed,
]

\addplot  [thick, color = darkgray, mark = pentagon*, mark size = 2, mark repeat = 0, mark phase = 0]
coordinates {
(100, 0.038205224815723984)(200, 0.020986112356226726)(300, 0.013077945904175582)(400, 0.010833219863844966)(500, 0.008576278805513931)
}; \addlegendentry{ERM}
\addplot  [thick, color = green, mark = none, mark size = 2, mark repeat = 0, mark phase = 0]
coordinates {
(100, 0.043063026507320475)(200, 0.01867160594731212)(300, 0.016836279184926715)(400, 0.012583238876945872)(500, 0.013387132767091978)
}; \addlegendentry{SS}
\addplot  [thick, color = violet, dashed , mark =none,  mark size = 2, mark repeat = 0, mark phase = 0]
coordinates {
(100, 0.03501720108428116)(200, 0.017457466183257792)(300, 0.011118050016002001)(400, 0.009595020170720503)(500, 0.006194802313007229)
}; \addlegendentry{PPI}
\addplot  [thick, color = violet,  mark = triangle,  mark size = 2, mark repeat = 0, mark phase = 0]
coordinates {
(100, 0.03714404822713604)(200, 0.01709130445844109)(300, 0.011230298964912224)(400, 0.009309577337728173)(500, 0.006125177820387484)
}; \addlegendentry{Tuned PPI}
\addplot  [thick, color =blue, dashed , mark = none,  mark size = 2, mark repeat = 0, mark phase = 0]
coordinates {
(100, 0.019128225789988858)(200, 0.0096201820708267)(300, 0.005551803406274322)(400, 0.004305942199718821)(500, 0.0031781539483262114)
}; \addlegendentry{CPPI}
\addplot  [thick, color = blue, mark = square, mark size = 2, mark repeat = 0, mark phase = 0]
coordinates {
(100, 0.020051261508457204)(200, 0.01005083952308111)(300, 0.005753272946929084)(400, 0.00431358875782876)(500, 0.0031619476078787413)
}; \addlegendentry{Tuned CPPI}
\end{axis}
\end{tikzpicture}}
\end{center}
\caption{Mean squared error as a function of the size of the labeled dataset for the problem of men estimation under the synthetic data-generation model \eqref{data_gen_model}. The results are averaged over $300$ trials.}
\label{mean_est}
\end{figure*}
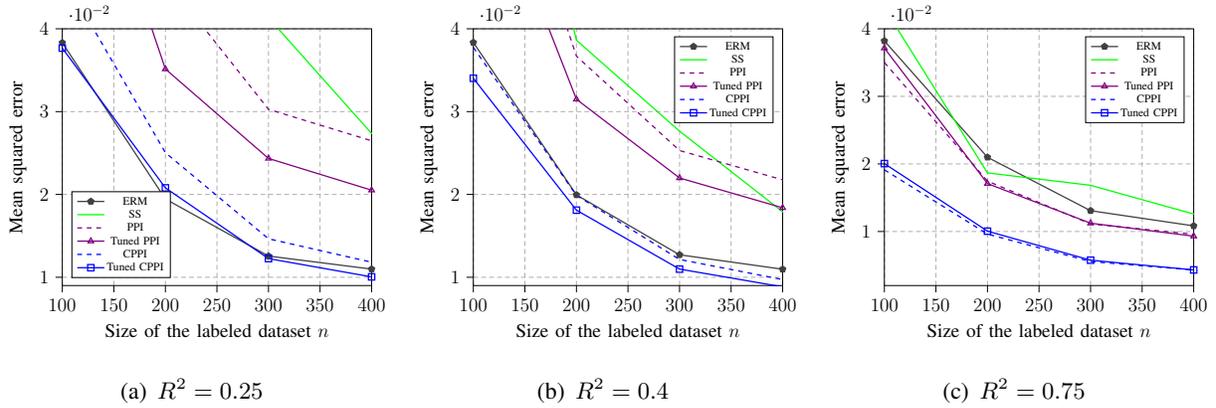

Given the data generation model in \eqref{data_gen_model}, the target here is the mean of variable $Y$, i.e., $\theta^\star=\ex[Y]=\mu$. Here, we fix $d=2$, $\sigma = 2$, and $\mu = 4$. \figref{mean_est} shows the performance of all schemes versus the number of data points in the labeled dataset, $n$, for different values of the correlation parameter $R$. The performance is measured in terms of the MSE between the true mean $\mu$ and its estimate. The proposed tuned CPPI guarantees the best performance under all settings. Specifically, when $R$ is small (\figref{mean_est}(a)), here $R^2=0.25$, which implies a low dependence of $Y$ on the feature vector $X$, tuned CPPI performs close to ERM, while outperforming CPPI, as the trained models are not expected to help in this case. Conversely, for high values of $R$ (\figref{mean_est}(c)), here $R^2=0.75$, tuned CPPI and CPPI have comparable performance and significantly outperform ERM. In the intermediate regime, corresponding to  $R^2=0.4$, tuned CPPI outperforms both CPPI and ERM (\figref{mean_est}(b)).

\subsection{Linear Regression}
Using the data generation model \eqref{data_gen_model} with $d=3$, $\mu = 0$, and $\sigma = 2$, we consider here estimating the coefficients of a linear regression model. following a standard feature selection methodology~\cite{BarrieWetherill1986}, only the first two features $(X_1, X_2)$ are included as covariates, that is, the target is the parameter vector $\theta^\star\in \mathbb{R}^2$ given by
\begin{align}
\theta^\star = \argmin_\theta \ex \left[(Y - X_{\rm red}^T\theta)^2\right],
\end{align}
where $X_{\rm red}= [X_1,X_2]^T$. The performance metric is the MSE between the true parameter vector $\theta^\star = [\beta_1,\beta_2]^T$ and its estimate.

We report in \figref{LR_est} the performance of all schemes versus the labeled dataset size $n$ for different values of $R$. We observe a similar behavior as in the mean estimation case. In particular, tuned CPPI guarantees the best performance in all settings, yielding better results than CPPI for low values of $R$, outperforming ERM for high values of $R$, and improving over both schemes in the intermediate regime of parameter $R$. Moreover, we note that the SS scheme, which disregards the prediction bias of the trained models, demonstrates significantly inferior performance.

\begin{figure*}[t]
\begin{center}

\subfigure[$R^2=0.1$]  
{  
\begin{tikzpicture}[scale=0.57]
\begin{axis}[
tick align=outside,
tick pos=left,
x grid style={white!69.0196078431373!black},
xlabel={Size of the labeled dataset $n$},
xmajorgrids,
xmin= 100, xmax=400,
xtick style={color=black},
y grid style={white!69.0196078431373!black},
ylabel={Mean squared error},
ymajorgrids,
ymin=0.01, ymax=0.15,
ytick style={color=black},
grid=major,
scaled y ticks=false,
yticklabel=\pgfkeys{/pgf/number format/.cd,fixed,precision=2,zerofill}\pgfmathprintnumber{\tick},
legend style={nodes={scale=0.6, transform shape}},
grid style=densely dashed,
]

\addplot  [semithick, color = black, mark = star, mark size = 2, mark repeat = 0, mark phase = 0]
coordinates {
(100, 0.070675690218214)(200, 0.039601915375337145)(300, 0.028489807904329697)(400, 0.020612161935576818)(500, 0.017829449179455434)
}; \addlegendentry{ERM}
\addplot  [thick, color = green, mark = none, mark size = 2, mark repeat = 0, mark phase = 0]
coordinates {
(100, 0.17710397777977557)(200, 0.10585892043567052)(300, 0.06594327162582075)(400, 0.05784348399401768)(500, 0.04555204757378674)
}; \addlegendentry{SS}

\addplot  [thick, color = violet, dashed , mark =none,  mark size = 2, mark repeat = 0, mark phase = 0]
coordinates {
(100, 0.16392652134125557)(200, 0.09075796107190237)(300, 0.06146592139769022)(400, 0.04267047742163092)(500, 0.034290335900861696)
}; \addlegendentry{PPI}

\addplot  [thick, color = violet,  mark = triangle,  mark size = 2, mark repeat = 0, mark phase = 0]
coordinates {
(100, 0.14895109986517116)(200, 0.07972174113984612)(300, 0.056101378817890256)(400, 0.037661891543907436)(500, 0.031714267681520705)
}; \addlegendentry{Tuned PPI}

\addplot  [thick, color =blue, dashed , mark = none,  mark size = 2, mark repeat = 0, mark phase = 0]
coordinates {
(100, 0.0846937861710592)(200, 0.04705446891336431)(300, 0.032064059778928365)(400, 0.02302975972092665)(500, 0.018388711146680805)
 }; \addlegendentry{CPPI}
\addplot  [thick, color = blue, mark = square, mark size = 2, mark repeat = 0, mark phase = 0]
coordinates {
(100, 0.07111684231852872)(200, 0.0393803486161813)(300, 0.028662766051141285)(400, 0.02076164287973274)(500, 0.017011445290394615)
}; \addlegendentry{Tuned CPPI}

\end{axis}
\end{tikzpicture}}
\subfigure[$R^2=0.4$]  
{  
\begin{tikzpicture}[scale=0.57]
\begin{axis}[
tick align=outside,
tick pos=left,
x grid style={white!69.0196078431373!black},
xlabel={Size of the labeled dataset $n$},
xmajorgrids,
xmin= 100, xmax=400,
xtick style={color=black},
y grid style={white!69.0196078431373!black},
ylabel={Mean squared error},
ymajorgrids,
ymin=0.01, ymax=0.15,
ytick style={color=black},
grid=major,
scaled y ticks=false,
legend style={at={(0.65,1.0)}},
yticklabel=\pgfkeys{/pgf/number format/.cd,fixed,precision=2,zerofill}\pgfmathprintnumber{\tick},
legend style={nodes={scale=0.6, transform shape}},
grid style=densely dashed,
]
\addplot  [semithick, color = black, mark = star, mark size = 2, mark repeat = 0, mark phase = 0]
coordinates {
(100, 0.07133436221630608)(200, 0.040061778967507955)(300, 0.026924169651654263)(400, 0.01996031563371543)(500, 0.018444861990462184)
}; \addlegendentry{ERM}
\addplot  [thick, color = green, mark = none, mark size = 2, mark repeat = 0, mark phase = 0]
coordinates {
(100, 0.43651505371478977)(200, 0.23580224771682376)(300, 0.14625719263827022)(400, 0.11455490504722278)(500, 0.0921973785855292)
}; \addlegendentry{SS}
\addplot  [thick, color = violet, dashed , mark =none,  mark size = 2, mark repeat = 0, mark phase = 0]
coordinates {
(100, 0.15437717476473672)(200, 0.07780486709852412)(300, 0.052602353791441764)(400, 0.03576574776784284)(500, 0.026346154673187502)
}; \addlegendentry{PPI}
\addplot  [thick, color = violet,  mark = triangle,  mark size = 2, mark repeat = 0, mark phase = 0]
coordinates {
(100, 0.13695120773588096)(200, 0.07004498056327527)(300, 0.04839171278396889)(400, 0.0328315891014417)(500, 0.025934031856541716)
}; \addlegendentry{Tuned PPI}
\addplot  [thick, color =blue, dashed , mark = none,  mark size = 2, mark repeat = 0, mark phase = 0]
coordinates {
(100, 0.07503007913961231)(200, 0.039192497353618926)(300, 0.025153073039370265)(400, 0.01804753981538715)(500, 0.01448936029630366)
 }; \addlegendentry{CPPI}
\addplot  [thick, color = blue, mark = square, mark size = 2, mark repeat = 0, mark phase = 0]
coordinates {
(100, 0.06254122922130405)(200, 0.03460249425638817)(300, 0.023154114513263352)(400, 0.01688593613940827)(500, 0.013823208789439694)
}; \addlegendentry{Tuned CPPI}
\end{axis}
\end{tikzpicture}}
\subfigure[$R^2=0.75$]  
{ 
\begin{tikzpicture}[scale=0.57]
\begin{axis}[
tick align=outside,
tick pos=left,
x grid style={white!69.0196078431373!black},
xlabel={Size of the labeled dataset $n$},
xmajorgrids,
xmin= 100, xmax=400,
xtick style={color=black},
y grid style={white!69.0196078431373!black},
ylabel={Mean squared error},
ymajorgrids,
ymin=0, ymax=0.15,
ytick style={color=black},
grid=major,
legend style={nodes={scale=0.6, transform shape}},
grid style=densely dashed,
legend style={at={(0.65,1.0)}},
scaled y ticks=false,
yticklabel=\pgfkeys{/pgf/number format/.cd,fixed,precision=2,zerofill}\pgfmathprintnumber{\tick},
]
\addplot  [semithick, color = black, mark = star, mark size = 2, mark repeat = 0, mark phase = 0]
coordinates {
(100, 0.07598316617086147)(200, 0.04121814771307011)(300, 0.025201952832024043)(400, 0.01907815528427858)(500, 0.01854890996003476)
}; \addlegendentry{ERM}
\addplot  [thick, color = green, mark = none, mark size = 2, mark repeat = 0, mark phase = 0]
coordinates {
(100, 0.41372273593992115)(200, 0.2409682200673317)(300, 0.1577173219253055)(400, 0.1260356577857746)(500, 0.10258721352445085)
}; \addlegendentry{SS}

\addplot  [thick, color = violet, dashed , mark =none,  mark size = 2, mark repeat = 0, mark phase = 0]
coordinates {
(100, 0.16011280251402107)(200, 0.07096153972326948)(300, 0.04119132214273373)(400, 0.027559194223723733)(500, 0.019197178839239173)
}; \addlegendentry{PPI}

\addplot  [thick, color = violet,  mark = triangle,  mark size = 2, mark repeat = 0, mark phase = 0]
coordinates {
(100, 0.12933224910029786)(200, 0.06251688738780474)(300, 0.03709249119450911)(400, 0.0250621969769986)(500, 0.01859212632013628)
}; \addlegendentry{Tuned PPI}

\addplot  [thick, color =blue, dashed , mark = none,  mark size = 2, mark repeat = 0, mark phase = 0]
coordinates {
(100, 0.0699504154561443)(200, 0.0323330370562257)(300, 0.018632044503738256)(400, 0.013124516400655619)(500, 0.010545839700501153)
 }; \addlegendentry{CPPI}
\addplot  [thick, color = blue, mark = square, mark size = 2, mark repeat = 0, mark phase = 0]
coordinates {
(100, 0.05612735664979723)(200, 0.02816380515900052)(300, 0.01620242000140961)(400, 0.01165778942889563)(500, 0.009763453148904554)
}; \addlegendentry{Tuned CPPI}

\end{axis}
\end{tikzpicture}}
\end{center}
\caption{Mean squared error as a function of the size of the labeled dataset for the problem of linear regression under the synthetic data-generation model \eqref{data_gen_model}. The results are averaged over $300$ trials.}
\label{LR_est}
\end{figure*}
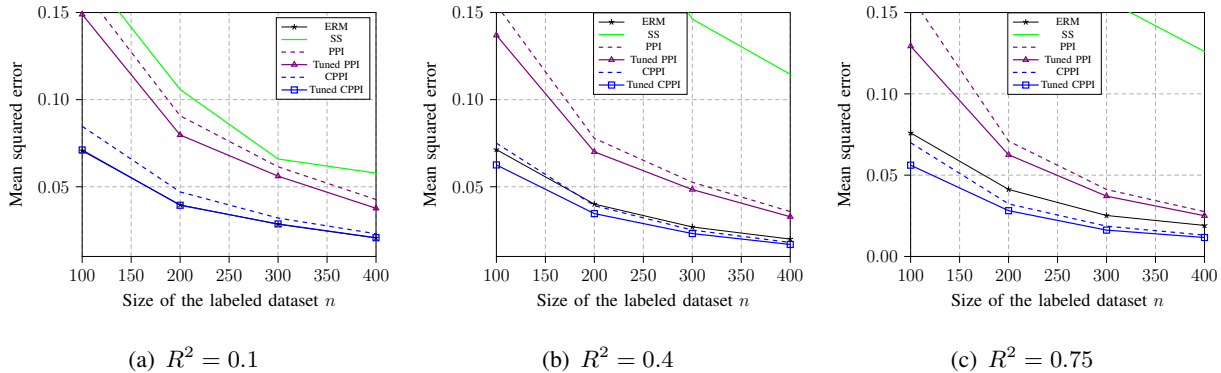

\section{Prediction-powered RSSI localization}
\label{loc_application}
In this section, we consider the problem of indoor localization based on received signal strength information (RSSI)~\cite{8371230}. RSSI-based localization is a popular positioning technique~\cite{5290385,6244790}, in which RSS measurements from multiple access points (APs) are used to infer the location of a user equipment (UE). We investigate the scenario in which a labeled dataset containing RSSI measurements $X$ and the corresponding UE locations $Y$ is available, along with an unlabeled dataset with only RSSI measurements $\tilde X$. For instance, the unlabeled dataset could contain RSS measurements collected by volunteers who do not reveal their location information~\cite{li2023exploiting}.

\subsection{System Model}
We consider an indoor environment $m$ access points (APs)~\cite{nabati2020using}. Given a dataset $\mathcal{D} = \{(X_i, Y_i)\}_{i=1}^n$ of RSSI measurements $X_i\in\mathbb{R}^m$ and the corresponding location information (longitude and latitude) $Y_i \in \mathbb{R}^2$, as well as an unlabeled dataset consisting of RSSI measurements $\tilde X_i\in\mathbb{R}^m$. 

As in \cite{yan2021extreme}, the goal is to train an extreme learning machine (ELM) regression model to predict the label $Y$ given RSSI measurements $X$~\cite{huang2006extreme,huang2011extreme}. An ELM model is a single hidden layer neural network, in which the weights connecting the hidden layer and the output layer are optimized, while other parameters are randomly initialized and kept fixed. With $p$ neurons in the hidden layer, define as $W\in\mathbb{R}^{p\times m}$ the weight matrix between the input and the hidden layer, and as $b\in\mathbb{R}^{p}$ the bias vector at the hidden layer. The ELM model predicts the label as~\cite{huang2011extreme}
\begin{align}
    \hat Y =  \left[h(X)^T \theta_1, h(X)^T \theta_2\right]^T,
\end{align}
where $\sigma(\cdot)$ is an activation function; $\theta = [\theta_1^T, \theta_2^T]^T \in \mathbb{R}^{2p}$ is a parameter vector to be optimized; and $h(X) =\sigma(W X + b)$. The loss function is given as~\cite{huang2011extreme}
\begin{align}
\ell_\theta(X,Y) = \sum_{j=1}^2\ex \left[ (Y[j]-h(X)^T\theta_j)^2\right] + \gamma \sum_{j=1}^2\norm{\theta_j}^2_2,
\end{align}
with $Y[j]$ being the $j$-th entry of vector $Y$. 

In order to reduce the data requirements for training the ELM model, we leverage the availability of unlabeled data $\tilde X$ by using ML prediction models. Specifically, the PPI model $f(X)$ and the CPPI models $\{f^{(k)}(X)\}$ are implemented here as a fully connected neural network with three hidden layers of size 256, 128, and 32, respectively, with a LeakyRelu activation function. 



\begin{figure}[t]
\begin{center}
\subfigure[Longitude error]  
{  
\begin{tikzpicture}[scale=0.6]
\begin{axis}[
tick align=outside,
tick pos=left,
x grid style={white!69.0196078431373!black},
xlabel={Labeled dataset size $n$},
ylabel={Longitude error},
xmajorgrids,
xmin= 40, xmax=130,
ymin=0.0065, ymax=0.012,
xtick style={color=black},
y grid style={white!69.0196078431373!black},
ymajorgrids,
ytick style={color=black},
grid=major,
scaled ticks = true,
legend pos = south west,
legend style={nodes={scale=0.7, transform shape}},
grid style=densely dashed,
]
\addplot  [thick, color = gray, mark = pentagon*, mark size = 2, mark repeat = 0, mark phase = 0]
coordinates {
(40, 0.010297337973106497)(70, 0.01029697525716408)(100, 0.009161283371945897)(130, 0.008500203063557447)(160, 0.008323487110476428)(190, 0.008078509939076212)(220, 0.007561824022237866)
}; \addlegendentry{ERM}
\addplot  [thick, color = green, mark = none, mark size = 2, mark repeat = 0, mark phase = 0]
coordinates {
(40, 0.018632011758214734)(70, 0.01302914554333463)(100, 0.01116820878501686)(130, 0.010043652216648854)(160, 0.008784125739800258)(190, 0.00858025381251872)(220, 0.00824439190452909)
}; \addlegendentry{SS}
\addplot  [thick, color = purple, dashed , mark =none,  mark size = 2, mark repeat = 0, mark phase = 0]
coordinates {
(40, 0.03115294710282869)(70, 0.017309179831415524)(100, 0.011464783857477455)(130, 0.009332148165847131)(160, 0.008801297430852228)(190, 0.008266206310928455)(220, 0.007603925684059627)
}; \addlegendentry{PPI}
\addplot  [thick, color = purple,  mark = triangle,  mark size = 2, mark repeat = 0, mark phase = 0]
coordinates {
(40, 0.022034065314194547)(70, 0.014642298018612977)(100, 0.010927085315882412)(130, 0.009220450247784928)(160, 0.008695819094572838)(190, 0.008196885576235454)(220, 0.007570954706834122)
}; \addlegendentry{Tuned PPI}
\addplot  [thick, color =blue, dashed , mark = none,  mark size = 2, mark repeat = 0, mark phase = 0]
coordinates {
(40, 0.011833502054935297)(70, 0.009773975169333749)(100, 0.008105956919593447)(130, 0.007447241993182553)(160, 0.00765786349946993)(190, 0.007278177374363486)(220, 0.00688709793405537)
}; \addlegendentry{CPPI}
\addplot  [thick, color = blue, mark = square, mark size = 2, mark repeat = 0, mark phase = 0]
coordinates {
(40, 0.00979934406450369)(70, 0.009458314725676802)(100, 0.008057984633302564)(130, 0.007454772668342289)(160, 0.00766616979658576)(190, 0.00730277926074455)(220, 0.0069023663309773255)
}; \addlegendentry{Tuned CPPI}
\end{axis}
\end{tikzpicture}}
\subfigure[Latitude error]  
{ 
\begin{tikzpicture}[scale=0.6]
\begin{axis}[
tick align=outside,
tick pos=left,
x grid style={white!69.0196078431373!black},
xlabel={Labeled dataset size $n$},
ylabel={Latitude error},
xmajorgrids,
xmin= 40, xmax=130,
ymin=0.035, ymax=0.12,
xtick style={color=black},
y grid style={white!69.0196078431373!black},
ymajorgrids,
ytick style={color=black},
grid=major,
scaled ticks = true,
legend style={nodes={scale=0.7, transform shape}},
grid style=densely dashed,
]
\addplot  [thick, color = gray, mark = pentagon*, mark size = 2, mark repeat = 0, mark phase = 0]
coordinates {
(40, 0.09475864363300751)(70, 0.071295017000384)(100, 0.05909793822058514)(130, 0.04444559376668023)(160, 0.042495851248909)(190, 0.040073381258375014)(220, 0.041637536667877055)
}; \addlegendentry{ERM}
\addplot  [thick, color = green, mark = none, mark size = 2, mark repeat = 0, mark phase = 0]
coordinates {
(40, 0.12713455024564962)(70, 0.08835293496306607)(100, 0.06278863069704127)(130, 0.05221602656497257)(160, 0.046181202358014956)(190, 0.04210844333308452)(220, 0.0426284796220007)
}; \addlegendentry{SS}
\addplot  [thick, color = purple, dashed , mark =none,  mark size = 2, mark repeat = 0, mark phase = 0]
coordinates {
(40, 0.14273166097574477)(70, 0.08161676301393371)(100, 0.05924756513228904)(130, 0.048326273126386085)(160, 0.0447422904607648)(190, 0.04061404491306132)(220, 0.04077921005499952)
}; \addlegendentry{PPI}
\addplot  [thick, color = purple,  mark = triangle,  mark size = 2, mark repeat = 0, mark phase = 0]
coordinates {
(40, 0.12360762332516381)(70, 0.0775232984275095)(100, 0.05791818584090778)(130, 0.04798169739061884)(160, 0.04434329560684611)(190, 0.04043572847178569)(220, 0.040671237125945406)
}; \addlegendentry{Tuned PPI}
\addplot  [thick, color =blue, dashed , mark = none,  mark size = 2, mark repeat = 0, mark phase = 0]
coordinates {
(40, 0.09667481952930844)(70, 0.0549489312251408)(100, 0.044502755137861796)(130, 0.03819758769837258)(160, 0.03972912733420482)(190, 0.035159333710084936)(220, 0.03757616070251002)
}; \addlegendentry{CPPI}
\addplot  [thick, color = blue, mark = square, mark size = 2, mark repeat = 0, mark phase = 0]
coordinates {
(40, 0.08813999097858767)(70, 0.05543871976949692)(100, 0.04510012136547817)(130, 0.03835919656529577)(160, 0.039704097897016115)(190, 0.035358634758717034)(220, 0.03767524599078778)
}; \addlegendentry{Tuned CPPI}
\end{axis}
\end{tikzpicture}}
\end{center}
\caption{Localization error as a function labeled dataset size $n$. the results are averaged over $100$ trials.}
\label{localization}
\end{figure}

\subsection{Numerical Results}
\label{num_res}
We compare the performance of tuned CPPI with the benchmark schemes using a real dataset from \cite{torres2014ujiindoorloc}, which contains indoor RSS measurements from multiple WiFi APs with the corresponding longitude and latitude information in several floors and three buildings. The dataset can be used for both regression, predicting longitude and latitude, and classification, predicting floor number and building. Since our focus here is regression, we consider in our experiments a subset of the data corresponding to one floor in a specific building. The considered subset contains 2097 samples, from which $20\%$ of samples are reserved for testing. Given a labeled data size $n$, the training data is randomly divided into a labeled dataset of size $n$ and an unlabeled dataset of size $N$. For PPI and tuned PPI, the labeled dataset is divided into two equal parts for fitting model $f(X)$ and for computing the rectifier term.

\figref{localization} reports longitude and latitude errors of all schemes as a function of the labeled dataset size $n$. We note that the tuned CPPI scheme guarantees the best performance and outperforms CPPI for smaller values of $n$. Moreover, both CPPI and tuned CPPI provide a significant gain as compared to ERM, while PPI, tuned PPI, and conventional SS (with $\gamma=1$) perform worse than ERM.

\section{Prediction-powered Beam Alignment in mmWave Massive MIMO}
\label{beam_alignment}
Beamforming design is a crucial task in mmWave massive MIMO systems~\cite{giordani2018tutorial,yang2023hierarchical,zhang2020adaptive}, as beamforming is necessary to compensate for the more severe path loss experienced at higher carrier frequencies~\cite{kutty2015beamforming}. Codebook-based beam alignment consists of selecting the best beam from codebooks of predefined beams based on beam sweeping, which requires transmission of pilot signals~\cite{kutty2015beamforming,yang2023hierarchical}.
%

Recently, a new approach has emerged that alleviates the training overhead by leveraging the concept of a channel knowledge map (CKM). As discussed in Sec.~\ref{Intro}, a CKM is a site-specific database of channel information linked to transmitter and receiver locations~\cite{zeng2021toward,wu2021environment}.
In this section, we propose a method that trains a mapping between a user's location and a pair of beams within their respective codebooks by leveraging both labeled and unlabeled data. As detailed below and illustrated in \figref{app_beam}, labeled data consists of a user's location and the corresponding CSI, while unlabeled data only includes a user's location. Furthermore, for the unlabeled inputs $X$, CSI parameters are estimated using a CKM.

\subsection{System Model}
\label{sys_model}
As in~\cite{zeng2021toward,wu2021environment}, we consider a downlink mmWave massive MIMO communication system, in which a base station (BS) equipped with $N^{TX}$ transmit antennas communicates with a user equipment (UE) with $N^{RX}$
receive antennas. We assume that both BS and UE have a single radio frequency chain, and that beamforming is achieved using phase shifters in the analog domain. The beamforming vectors at the BS and the UE, denoted as $u\in\mathbb{C}^{N^{TX}\times 1} $ and $w\in\mathbb{C}^{N^{RX}\times 1} $, respectively, are ideally chosen from finite codebooks $\mathcal{U}$ and $\mathcal{W}$ by maximizing the signal-to-noise ratio (SNR). Indexing the set of all beamforming pairs $(u,w) \in \mathcal{U}\times \mathcal{W}$ as $(u_j,w_j)$ with $j\in \mathcal{J} = \{1,\cdots,J = |\mathcal{U}||\mathcal{W}|\} $, the optimal beam index is obtained as
\begin{align}
    Y = \argmax_{j \in \mathcal{J}} \ \left|u_j^H Hw_j\right|^2,
    \label{beam_selection}
\end{align}
where $H \in \mathbb{C}^{N^{TX} \times N^{RX}}$ is the current channel matrix. Accordingly, the ideal beam selection in \eqref{beam_selection} requires knowledge of the current channel matrix $H$. 

When the current CSI is unknown, an alternative is to select the beams $(u_j,w_j)$ based solely on the location $X \in \mathbb{R}^3$ of the UE~\cite{zeng2021toward}. To this end, we adopt here a softmax regression model that maps a feature vector $\psi(X)\in \mathbb{R}^m$ of the UE location $X$ to the corresponding estimated beam index $\hat Y\in \mathcal{J}$. The feature vector $\psi(X)$ is obtained here using the radial basis function (RBF) kernel with Nystr\"{o}m approximation \cite{NIPS2000_19de10ad}.

The beam index is then selected based on the location $X$ as
\begin{align}
\hat{Y} = \argmax_{j \in \mathcal{J}}  \left\{\frac{\exp(\theta_j^T \psi(X))}{\sum_{j'\in \mathcal{J}}\exp(\theta_{j'}^T \psi(X)}\right\},
\label{linear_softmax_reg}
\end{align}
where $\theta = [\theta_1^T, \cdots,  \theta_J^T]^T$, with $\theta_j\in\mathbb{R}^m$, is a parameter vector. The optimal beam selection parameter $\theta^\star$ is obtained as the solution of the problem
\begin{align}
   \theta^\star = \argmin_\theta \ \ex [\ell_\theta(X, Y)],
    \label{opt_param}
\end{align}
where the average is over the joint distribution of the UE location $X$ and of the corresponding optimal beam $Y$ in \eqref{beam_selection}; and the loss function $\ell_\theta(X, Y)$ is the regularized softmax regression loss~\cite{menard2002applied}
\begin{align}
\ell_\theta(X, Y)= - \sum_{j\in\mathcal{J}}\mathbbm{1}_{\{Y=j\}}\log \frac{\exp(\theta_j^T \psi(X))}{\sum\limits_{j'\in \mathcal{J}}^{}\exp(\theta_{j'}^T \psi(X))} + \gamma  \norm{\theta}^2_2,
\label{loss_func}
\end{align}
with $\gamma\geq 0$ being a regularization parameter. Problem \eqref{opt_param} is in the form \eqref{estimand}, since the loss $\ell_\theta(X, Y)$ is a convex function of parameters $\theta$.

To address problem \eqref{opt_param}, we assume access to a labeled dataset $\mathcal{D} = \{( X_i, Y_i)\}_{i=1}^n$ where label $Y$ is obtained from available CSI using \eqref{beam_selection}, as well as an unlabeled dataset $\tilde{\mathcal{D}} = \{\tilde X_i\}_{i=1}^N$ drawn from the distribution $P_X$ of the UE's locations. Note that prior information about the UE location distribution can be more easily acquired than CSI via passive tracking systems for instance~\cite{ribeiro2022passive,redondi2018building}.


\subsection{CKM-Based Beam Alignment}

In order to use the unlabeled data of the user's locations, it is necessary to train a model $f(X)$ that maps UE location $X$ to beam index $Y$. To this end, we consider a model of the form
\begin{align}
    Y = f(X) = g(c(X)),
    \label{model_f_beam}
\end{align}
where $c(X)$ is the CKM-based mapping between position $X$ and CSI matrix $H$, while function $g(\cdot)$ maps CSI $H$ to the optimal beam index by solving \eqref{beam_selection} with $H=c(X)$. 

To define the CKM-based mapping function $c(X)$, we follow \cite{wu2021environment}, which assigns to each position $X$ multi-path information $Z$ defined as
\begin{align}
Z = \left\{ L, \{ \alpha^l, \theta_{AoD}^l, \phi_{AoD}^l,  \theta_{AoA}^l, \phi_{AoA}^l\}_{l=1}^L \right\},
\label{Z_expression}
\end{align}
where $L$ is the number of significant paths; $\alpha^l$ is the complex gain of the $l$-th path; $\theta_{AoA}^l$ and $\phi_{AoA}^l$ are the zenith and azimuth angles of arrival (AoA); and $\theta_{AoD}^l$ and $\phi_{AoD}^l$ are the zenith and azimuth angles of departure (AOD). From the path information $Z$, one can construct an estimate of the channel matrix $H$ using a standard multipath model
\begin{align}
  H = h(Z).
  \label{channel_func}
\end{align}

In order to enable the training of the CKM function $H=c(X)$, and thus of the model $f(X) $ in \eqref{model_f_beam}, we assume access to a labeled dataset $\mathcal{D}_Z = \{X_i,Z_i\}_{i=1}^n$ containing pairs of UE locations $X_i$ and the corresponding path information $Z_i$. Using the channel function \eqref{channel_func} and the optimal beam index \eqref{beam_selection}, one can recover the labeled data $\mathcal{D} = \{X_i,Y_i\}_{i=1}^n$.

\begin{figure}[ht]
\begin{center}
\subfigure[$n = 300$]  
{ 
\begin{tikzpicture}[scale=0.6]
\begin{axis}[
tick align=outside,
tick pos=left,
x grid style={white!69.0196078431373!black},
xlabel={Number of BS antennas},
xmajorgrids,
xmin= 50, xmax=300,
xtick style={color=black},
y grid style={white!69.0196078431373!black},
ylabel={Channel capacity (bps/Hz)},
ymajorgrids,
ymin=2.2, ymax=6.2,
ytick style={color=black},
grid=major,
scaled ticks = true,
legend pos = north west,
legend style={nodes={scale=0.6, transform shape}},
grid style=densely dashed,
]

\addplot  [thick, color = darkgray, mark = pentagon*, mark size = 2, mark repeat = 0, mark phase = 0]
coordinates {
(50, 2.8462321504523986)(100, 2.8298740082887175)(200, 2.4747949110130256)(300, 2.4718935382231493)(400, 2.4589148074337412)
}; \addlegendentry{ERM}

\addplot  [thick, color = green, mark = none, mark size = 2, mark repeat = 0, mark phase = 0]
coordinates {
(50, 3.127171245766773)(100, 2.86681174369333)(200, 2.7120769123778374)(300, 2.480279706807535)(400, 2.2693412763279803)
}; \addlegendentry{SS}

\addplot  [thick, color = violet, dashed , mark =none,  mark size = 2, mark repeat = 0, mark phase = 0]
coordinates {
(50, 3.1091302917510526)(100, 3.1064420386328306)(200, 2.978899322031873)(300, 2.850668048910205)(400, 2.9396287616453227)
}; \addlegendentry{PPI}

\addplot  [thick, color = violet,  mark = triangle,  mark size = 2, mark repeat = 0, mark phase = 0]
coordinates {
(50, 3.2810355834994636)(100, 3.4360280793213196)(200, 3.2642934704791373)(300, 3.462602429364957)(400, 3.3148865618234407)
}; \addlegendentry{Tuned PPI}

\addplot  [thick, color =blue, dashed , mark = none,  mark size = 2, mark repeat = 0, mark phase = 0]
coordinates {
(50, 3.5897314285207305)(100, 3.71797192731885)(200, 3.7669154054205753)(300, 3.7941842493585143)(400, 3.7941553706952926)
}; \addlegendentry{CPPI}

\addplot  [thick, color = blue, mark = square, mark size = 2, mark repeat = 0, mark phase = 0]
coordinates {
(50, 3.708392834943368)(100, 3.942248364762279)(200, 4.04301454960707)(300, 4.0556940508024635)(400, 4.081637986368768)
}; \addlegendentry{Tuned CPPI}

\addplot  [semithick, color = black, mark = star, mark size = 2, mark repeat = 0, mark phase = 0]
coordinates {
(50, 4.426156413569589)(100, 5.042716906664163)(200, 5.700992597706176)(300, 6.113098842937517)(400, 6.410515691911261)
}; \addlegendentry{Perfect-CSI}


\end{axis}
\end{tikzpicture}}
\subfigure[$n = 3000$]  
{  
\begin{tikzpicture}[scale=0.6]
\begin{axis}[
tick align=outside,
tick pos=left,
x grid style={white!69.0196078431373!black},
xlabel={Number of BS antennas},
xmajorgrids,
xmin= 50, xmax=300,
xtick style={color=black},
y grid style={white!69.0196078431373!black},
ylabel={Channel capacity (bps/Hz)},
ymajorgrids,
ymin=3.9, ymax=6.2,
ytick style={color=black},
grid=major,
scaled ticks = true,
legend pos = north west,
legend style={nodes={scale=0.6, transform shape}},
grid style=densely dashed,]
\addplot  [thick, color = darkgray, mark = pentagon*, mark size = 2, mark repeat = 0, mark phase = 0]
coordinates {
(50, 3.978325455875757)(100, 4.305812790488412)(200, 4.364690547432143)(300, 4.3088249225573945)(400, 4.135770834021695)
}; \addlegendentry{ERM}
\addplot  [thick, color = green, mark = none, mark size = 2, mark repeat = 0, mark phase = 0]
coordinates {
(50, 3.9400667384958776)(100, 4.289113227517361)(200, 4.387841194681612)(300, 4.36072378870601)(400, 4.162944316911362)
}; \addlegendentry{SS}
\addplot  [thick, color = violet, dashed , mark =none,  mark size = 2, mark repeat = 0, mark phase = 0]
coordinates {
(50, 3.973742070298371)(100, 4.343628025175115)(200, 4.662660785457252)(300, 4.760743391797428)(400, 4.885505322283333)
}; \addlegendentry{PPI}
\addplot  [thick, color = violet,  mark = triangle,  mark size = 2, mark repeat = 0, mark phase = 0]
coordinates {
(50, 3.9914237996745414)(100, 4.407377712403373)(200, 4.674837160560732)(300, 4.82866734756381)(400, 4.833400742724301)
}; \addlegendentry{Tuned PPI}
\addplot  [thick, color =blue, dashed , mark = none,  mark size = 2, mark repeat = 0, mark phase = 0]
coordinates {
(50, 4.122242967020595)(100, 4.539280321626003)(200, 4.956654767902216)(300, 5.1147089709683895)(400, 5.113108902870125)
}; \addlegendentry{CPPI}

\addplot  [thick, color = blue, mark = square, mark size = 2, mark repeat = 0, mark phase = 0]
coordinates {
(50, 4.13787744434379)(100, 4.553947201798361)(200, 4.8993033686150556)(300, 5.1283061983318685)(400, 5.109038041703121)
}; \addlegendentry{Tuned CPPI}
\addplot  [semithick, color = black, mark = star, mark size = 2, mark repeat = 0, mark phase = 0]
coordinates {
(50, 4.426156413569589)(100, 5.042716906664163)(200, 5.700992597706176)(300, 6.113098842937517)(400, 6.410515691911261)
}; \addlegendentry{Perfect-CSI}
\end{axis}
\end{tikzpicture}}
\end{center}
\caption{Channel capacity as a function of the number of BS antennas for different values of the labeled dataset size $n$.}
\label{beam_selec}
\end{figure}

\subsection{Numerical Results}
We compare the performance of the proposed tuned CPPI approach to the benchmark schemes ERM, SS (with $\gamma=1$), PPI, tuned PPI, and CPPI described in Sec. \ref{problem_def} and Sec. \ref{PPI}. We also use as a reference the ideal case of perfect CSI.
For CPPI and tuned CPPI, we train $K=8$ models using the procedure described in Section \ref{cp_sec}. As for PPI and tuned PPI, the labeled dataset is divided into two subsets of equal size, one used for training the prediction model $f(X)$ in \eqref{model_f_beam}, and the other for bias correction when computing the PPI and tuned PPI losses in \eqref{ppi_loss} and \eqref{tunedppi_loss}, respectively. The PPI model $f(X)$ and the CPPI models $\{f^{(k)}(X)\}$ are obtained using \eqref{model_f_beam}, where the CKM function $c(X)$ is implemented as a fully connected neural network with three hidden layers of size 128, 256, and 1024, respectively, with the LeakyRelu activation function \cite{xu2015empirical}.

We consider the same physical environment and dataset as in~\cite{zeng2021toward}. The dataset contains ground-truth multi-path channel information $Z $ in \eqref{Z_expression} generated by using the \emph{ray tracing} software Remcom Wireless Insite\footnote{https://www.remcom.com/wireless-insite-em-propagation-software}. { The BS is equipped with a $N_y\times N_z$ uniform planar array (UPA),} and the UE is equipped with a single antenna. Furthermore, Kronecker product-based beamforming
codebooks are employed~\cite{xie2013limited}. { As in~\cite{zeng2021toward}, the codebook is obtained based on the transmit array responses, with uniformly sampled
values over $\sin(\phi)$ and $\cos(\theta)$, having respective resolutions of $1/{(2N_y)}$ and $1/{(2N_z)}$.} The total number of samples available in the dataset is $38038$, from which $n$ samples are reserved as labeled data, and the remaining $N  =38038-n $ samples are considered as unlabeled data. 

\begin{figure}[t]
\begin{center}
\begin{tikzpicture}[scale=0.6]
\begin{axis}[
tick align=outside,
tick pos=left,
x grid style={white!69.0196078431373!black},
xlabel={Size of labeled dataset $n$},
xmajorgrids,
xmin= 100, xmax=1500,
xtick style={color=black},
y grid style={white!69.0196078431373!black},
ylabel={Channel capacity (bps/Hz)},
ymajorgrids,
ymin=2.2, ymax=5,
ytick style={color=black},
grid=major,
scaled ticks = true,
legend pos = south east,
legend style={nodes={scale=0.7, transform shape}},
grid style=densely dashed,
]
\addplot  [thick, color = gray, mark = pentagon*, mark size = 2, mark repeat = 0, mark phase = 0]
coordinates {
(100, 2.417119392583258)(300, 2.4747949110130256)(600, 3.191475093797135)(1000, 3.5519291023289896)(1500, 3.95980650302989)(2200, 4.162599625266602)(3000, 4.364690547432143)
}; \addlegendentry{ERM}
\addplot  [thick, color = green, mark = none, mark size = 2, mark repeat = 0, mark phase = 0]
coordinates {
(100, 1.7092555509383554)(300, 2.4164460787246917)(600, 3.1077813742583142)(1000, 3.624658998412027)(1500, 3.9617063894958937)(2200, 4.329959635002709)(3000, 4.387841194681612)
}; \addlegendentry{SS}
\addplot  [thick, color = purple, dashed , mark =none,  mark size = 2, mark repeat = 0, mark phase = 0]
coordinates {
(100, 2.7332404582444583)(300, 2.978899322031873)(600, 3.861775003812623)(1000, 3.98097260604248)(1500, 4.377017411335242)(2200, 4.651795488072257)(3000, 4.662660785457252)
}; \addlegendentry{PPI}
\addplot  [thick, color = purple,  mark = triangle,  mark size = 2, mark repeat = 0, mark phase = 0]
coordinates {
(100, 2.8775712367455353)(300, 3.2642934704791373)(600, 4.1445435612867305)(1000, 4.202231359463732)(1500, 4.4271424641937625)(2200, 4.660800962695074)(3000, 4.674837160560732)
}; \addlegendentry{Tuned PPI}
\addplot  [thick, color =blue, dashed , mark = none,  mark size = 2, mark repeat = 0, mark phase = 0]
coordinates {
(100, 2.9967508024053853)(300, 3.7669154054205753)(600, 4.326774334005395)(1000, 4.650579427459735)(1500, 4.739281595762133)(2200, 4.893123876398642)(3000, 4.956654767902216)
}; \addlegendentry{CPPI}
\addplot  [thick, color = blue, mark = square, mark size = 2, mark repeat = 0, mark phase = 0]
coordinates {
(100, 3.1454113920854856)(300, 4.04301454960707)(600, 4.462151282893996)(1000, 4.69770562438163)(1500, 4.7821103266259755)(2200, 4.90747249694001)(3000, 4.8993033686150556)
}; \addlegendentry{Tuned CPPI}
\end{axis}
\end{tikzpicture}
\end{center}
\caption{Channel capacity as a function of the size of labeled dataset $n$ when the number of BS antennas is fixed to $N^{TX} = 200$.}
\label{beam_selec_vsn}
\end{figure}

In \figref{beam_selec}, we report the performance in terms of the channel capacity $\log_2(1+\text{SNR})$, {  with $\text{SNR}$ defined as
\begin{equation}
\text{SNR} = \frac{P \left|u_Y^H Hw_Y\right|^2}{\sigma^2},
\end{equation}
where $P$ is the transmit power at the BS, $\sigma^2$ is the noise variance, and $Y$ is the index of the selected beam pair. The index $Y$ is evaluated as \eqref{linear_softmax_reg} for all schemes, except for the perfect CSI case, for which\eqref{beam_selection} is used instead.} We vary the number of transmit antennas at the BS while setting the size of the labeled dataset to $n=300$ and $n=3000$. { The size of the UPA array at the BS is such that $N_z=10$ and $N_y$ varies from 5 to 30}. It is observed that the proposed tuned CPPI outperforms all the benchmark schemes while exhibiting the same performance as CPPI when the number of labeled data points, $n$, is large enough. { Indeed, when $n$ is sufficiently high, the trained models, for both PPI and CPPI, are accurate, and thus the optimal tuning parameter $\lambda^*$ becomes close to 1. Consequently, tuned PPI and tuned CPPI perform in a way similar to PPI and CPPI, respectively.}

We also remark from the figure that conventional semi-supervised (SS) learning, which does not take into consideration the bias from the trained prediction models, fails to provide any gains by incorporating the unlabeled data. Finally, we note that PPI and tuned PPI, while offering some gain as compared to ERM, do not reach the performance of tuned CPPI.


To further elaborate on the impact of the number of labeled data points, $n$, we report the channel capacity as a function of the labeled dataset size $n$ in \figref{beam_selec_vsn}. { The number of BS antennas in this experiment is fixed to $N^{TX}=200$ with $N_y=20$ and $N_z=10$.} As seen, the proposed tuned PPI provides the best performance among all the benchmark schemes, with more significant gains for smaller values of the labeled dataset size $n$. In fact, in this regime, the trained CKM is not sufficiently accurate, and tuned CPPI, which can adapt to the quality of the trained models via the tuning parameter $\lambda$, yields better performance.

{ 
\subsection{Experiments with Neural Network Models}
\label{sim_nn_models}
Considering the same system model described in Sec. VIII.\ref{sys_model}, we present here experiments with neural network (NN) models, providing also comparison with MPl and with the proposed MCPPI. It is assumed that the labeled dataset contains the UE location $X$ and the corresponding beam index $Y$. Furthermore, models $\{f^{(k)}(\cdot)\}_{k=1}^K$ are NN classifiers predicting the beam index, which are implemented as
fully connected NNs with two hidden layers of size 128 and 256, respectively, with the Relu activation function. We use the same architecture for the target model parameterized by $\theta$, which is also referred to as the student model $S_\theta(\cdot)$ in Sec.\ref{CPPI_nn}.

We start by comparing tuned CPPI with the benchmark schemes ERM, SS, PPI, tuned PPI, and CPPI in \figref{beam_selec_nns}. Tuned CPPI outperforms CPPI and shows a significant gain compared to using labeled data alone as in ERM. As in \figref{beam_selec_vsn}, PPI and tuned PPI provide gains compared to ERM but are outperformed by CPPI and tuned CPPI. 

To further demonstrate the efficiency of the proposed schemes, we provide a comparison with the strong SSL benchmark MPL~\cite{pham2021meta}. In \figref{beam_selec_nns_an}, we plot the channel capacity as a function of the number of BS antennas for MPL, CPPI, tuned CPPI, and the proposed MCPPI. The results show that CPPI and tuned CPPI deliver slightly lower performance than MPL, indicating that they are competitive SSL schemes. Moreover, MCPPI is able to overcome this deficit and provides better performance than MPL.

\begin{figure}[ht]
\begin{center}
\begin{tikzpicture}[scale=0.6]
\begin{axis}[
tick align=outside,
tick pos=left,
x grid style={white!69.0196078431373!black},
xlabel={Size of labeled dataset $n$},
xmajorgrids,
xmin= 512, xmax=4100,
xtick style={color=black},
y grid style={white!69.0196078431373!black},
ylabel={Channel capacity (bps/Hz)},
ymajorgrids,
ymin=3.5, ymax=5.2,
ytick style={color=black},
grid=major,
scaled ticks = true,
legend pos = south east,
legend style={nodes={scale=0.6, transform shape}},
grid style=densely dashed,
]

\addplot  [semithick, color = red, mark = star, mark size = 2, mark repeat = 0, mark phase = 0]
coordinates {
(128, 2.514505359522904)(256, 2.961611773529421)(512, 3.4743600451567844)(1024, 3.7288443195443928)(2048, 4.003198497178796)(4096, 4.443089397969243)
}; \addlegendentry{ERM}

\addplot  [thick, color = green, mark = none, mark size = 2, mark repeat = 0, mark phase = 0]
coordinates {
(128, 1.939229513428491)(256, 2.218801517818342)(512, 2.825604075466892)(1024, 3.6582947308801166)(2048, 4.047121394096558)(4096, 4.409081553728858)
}; \addlegendentry{SS}
\addplot  [thick, color = violet, dashed , mark =none,  mark size = 2, mark repeat = 0, mark phase = 0]
coordinates {
(128, 2.879609341957804)(256, 3.3686166749777158)(512, 3.7546085365760127)(1024, 4.430991417880725)(2048, 4.616340543165606)(4096, 4.83076669857406)
}; \addlegendentry{PPI}
\addplot  [thick, color = violet,  mark = triangle,  mark size = 2, mark repeat = 0, mark phase = 0]
coordinates {
(128, 2.8718675582774065)(256, 3.3895610809050414)(512, 3.6760500342169613)(1024, 4.4109511203354606)(2048, 4.6578578646712865)(4096, 4.840696840513544)
}; \addlegendentry{Tuned PPI}

\addplot  [thick, color =blue, dashed , mark = none,  mark size = 2, mark repeat = 0, mark phase = 0]
coordinates {
(128,3.55493712)(256 4.06670422)(512, 4.47434381 ) (1024,4.88169697)(2048,5.06650802)(4096,5.10397698)

}; \addlegendentry{CPPI}

\addplot  [thick, color = blue, mark = square, mark size = 2, mark repeat = 0, mark phase = 0]
coordinates {
(128,3.6353396)(256,4.09319052)(512,4.52715014)(1024,4.9655461)(2048,5.12293704)(4096,5.17871482)
}; \addlegendentry{Tuned CPPI}

\end{axis}
\end{tikzpicture}
\end{center}
\caption{Channel capacity as a function of the size of labeled dataset $n$ when the number of BS antennas is fixed to $N^{TX} = 200$.}
\label{beam_selec_nns}
\end{figure}

\begin{figure}[ht]
\begin{center}
\subfigure[$n = 500$]{
\begin{tikzpicture}[scale=0.6]
\begin{axis}[
tick align=outside,
tick pos=left,
x grid style={white!69.0196078431373!black},
xlabel={Size of labeled dataset $n$},
xmajorgrids,
xmin= 50, xmax=200,
xtick style={color=black},
y grid style={white!69.0196078431373!black},
ylabel={Channel capacity (bps/Hz)},
ymajorgrids,
ymin=4, ymax=4.8,
ytick style={color=black},
grid=major,
scaled ticks = true,
legend pos = south east,
legend style={nodes={scale=0.6, transform shape}},
grid style=densely dashed,
]
\addplot  [thick, color =blue, dashed , mark = none,  mark size = 2, mark repeat = 0, mark phase = 0]
coordinates {
(50,4.08410029)(100, 4.32087662)(200, 4.42758192 )
}; \addlegendentry{CPPI}

\addplot  [thick, color = blue, mark = square, mark size = 2, mark repeat = 0, mark phase = 0]
coordinates {

(50,4.08253645)(100 ,4.34401383)(200, 4.52949144 )
}; \addlegendentry{Tuned CPPI}

\addplot  [semithick, color = brown, mark = star, mark size = 2, mark repeat = 0, mark phase = 0]
coordinates {
(50, 4.13484012)(100, 4.45593277 )(200, 4.61269783)
}; \addlegendentry{MPL}

\addplot  [semithick, color = black, mark = +, mark size = 2, mark repeat = 0, mark phase = 0]
coordinates {
(50, 4.15447439 )(100,  4.5395619)(200, 4.70367368)
}; \addlegendentry{MCPPI}

\end{axis}
\end{tikzpicture}}
\subfigure[$n = 2000$]{
\begin{tikzpicture}[scale=0.6]
\begin{axis}[
tick align=outside,
tick pos=left,
x grid style={white!69.0196078431373!black},
xlabel={Size of labeled dataset $n$},
xmajorgrids,
xmin= 50, xmax=200,
xtick style={color=black},
y grid style={white!69.0196078431373!black},
ylabel={Channel capacity (bps/Hz)},
ymajorgrids,
ymin=4, ymax=5.2,
ytick style={color=black},
grid=major,
scaled ticks = true,
legend pos = south east,
legend style={nodes={scale=0.6, transform shape}},
grid style=densely dashed,
]
\addplot  [thick, color =blue, dashed , mark = none,  mark size = 2, mark repeat = 0, mark phase = 0]
coordinates {
(50, 4.261288513906704)(100, 4.663109137175012)(200, 4.964895015528836)
}; \addlegendentry{CPPI}

\addplot  [thick, color = blue, mark = square, mark size = 2, mark repeat = 0, mark phase = 0]
coordinates {
(50, 4.269412882720594)(100, 4.724744304426871)(200, 5.015903796154096)
}; \addlegendentry{Tuned CPPI}

\addplot  [semithick, color = brown, mark = star, mark size = 2, mark repeat = 0, mark phase = 0]
coordinates {
(50, 4.291356479029062)(100, 4.75914603449723)(200, 5.024543033212285)
}; \addlegendentry{MPL}

\addplot  [semithick, color = black, mark = +, mark size = 2, mark repeat = 0, mark phase = 0]
coordinates {
(50, 4.3385862725976105)(100, 4.808051385627203)(200, 5.119252466496168)
}; \addlegendentry{MCPPI}

\end{axis}
\end{tikzpicture}}
\end{center}
\caption{Channel capacity as a function of the number of BS antennas when the size of labeled data is fixed to $n =500$ and $n=2000$.}
\label{beam_selec_nns_an}
\end{figure}

}


\section{Conclusion}
\label{conclusion}
In this work, we have investigated the potential benefits of prediction-powered inference (PPI) for wireless systems. PPI leverages predictions generated by ML models to augment labeled data, while carefully considering their inherent prediction bias. PPI and its variant tuned PPI assume the existence of a pre-trained prediction model, while cross PPI (CPPI) alleviates this assumption by leveraging labeled data for the tasks of training prediction models and of correcting the inherent bias.

We have specifically explored two potential applications of the framework within wireless systems: CKM-based beam alignment and RSSI-based indoor localization. In addition, we have introduced a novel variant of PPI, tuned CPPI, which endows CPPI with the capacity to tailor based on the accuracy of the trained prediction models. We have compared the performance of all PPI variants with ERM, a baseline scheme reliant solely on labeled data, and with conventional semi-supervised learning, which overlooks the prediction bias. Numerical results show the superiority of the proposed tuned CPPI scheme over PPI variants and classical pseudo-labeling schemes. We have also investigated the integration of CPPI with meta pseudo labeling, a strong SSL benchmark. This resulted in a new scheme outperforming both MPL and tuned CPPI.


PPI schemes can benefit any wireless application in which labeled data are costly to obtain, but unlabeled data are available. Identifying and investigating these applications represents a promising direction for future research. Of particular interest are applications that leverage digital twins of the physical system to augment the datasets~\cite{jiang2023digital,ruah2023calibrating,luo2024digital}. At a methodological level, extensions of the proposed tuned CPPI to active learning methods~\cite{zrnic2024active} could be addressed by further research.


\section*{Appendix A\\Bootstrap-Based Estimation}
\label{appen_A}
Here, we describe the bootstrap approach to estimate the covariance matrix $\var(\nabla\ell^{\bar f}_{\theta^\star})$ and the cross-covariance matrix $\cov(\nabla\ell_{\theta^\star},\nabla\ell^{\bar f}_{\theta^\star})$. The approach consists of simulating several runs of the training process and averaging their results. For each run $b\in\{1,\cdots, B\}$, we sample $n-n/K$ data points from the labeled dataset, denoted by $\mathcal{I}_b$, and train a model $f^{(b)}(X)$ using these data points. The predictions from the trained models $\{f^{(b)}(X)\}_{b=1}^B$ are leveraged to estimate $\var(\nabla\ell^{\bar f}_{\theta^\star})$ and $\cov(\nabla\ell_{\theta^\star},\nabla\ell^{\bar f}_{\theta^\star})$. 

The first covariance $\var(\nabla\ell^{\bar f}_{\theta^\star})$ can be estimated using the unlabeled dataset as follows
\begin{align}
\widehat{\var}(\nabla\ell^{\bar f}_{\hat\theta}) = \widehat{\var}\left( \nabla \ell_{\hat\theta^{}_{}}(\tilde X_i,\bar{f}(\tilde X_i)); \ i\in[N] \right),
\label{bootstrap_cov_unlabeled}
\end{align}
where $\bar{f}(\tilde X_i) = \frac{1}{b}\sum_{b=1}^Bf^{(b)}(\tilde X_i)$, and $\widehat{\var}(v_i; \ i\in \mathcal{I} )$ denotes the empirical estimate of the covariance matrix of vectors $v_i$ using data samples with indices $i\in \mathcal{I} $, i.e.,
$$
\widehat{\var}(v_i; \ i\in \mathcal{I} ) = \frac{1}{|\mathcal{I}|}\sum_{i\in \mathcal{I}} (v_i-\bar v)(v_i-\bar v)^T,
$$
with $ \bar v  = \frac{1}{|\mathcal{I}|}\sum_{i\in \mathcal{I}} v_i$. As for the cross-covariance $\cov(\nabla\ell_{\theta^\star},\nabla\ell^{\bar f}_{\theta^\star})$, the labeled dataset is used as
\begin{align}
\widehat{\cov}(\nabla\ell_{\hat\theta},\nabla\ell^{\bar f}_{\hat\theta}) &= \widehat{\cov}\bigg(\nabla\ell_{\hat\theta}(X_i,Y_i),\nabla\ell^{}_{\hat\theta}(X_i, f^{(b)}(X_i)); \quad \ i \in  \{1,\cdots,n\}\backslash I_b, \ b\in \{1,\cdots,B\}\bigg),
\label{bootstrap_cross_cov}
\end{align}
where $\widehat{\cov}(u_i,v_i; \ i\in \mathcal{I} )$ denotes the empirical estimate of the covariance matrix between vectors $u_i$ and $v_i$ using data samples with indices $i\in \mathcal{I} $, i.e.,
$$
\widehat{\cov}(u_i,v_i; \ i\in \mathcal{I} ) =  \frac{1}{|\mathcal{I}|}\sum_{i\in \mathcal{I}} (u_i-\bar u)(v_i-\bar v)^T,
$$
with $\bar u  = \frac{1}{|\mathcal{I}|}\sum_{i\in \mathcal{I}} u_i$ and $\bar v  = \frac{1}{|\mathcal{I}|}\sum_{i\in \mathcal{I}} v_i$. Note that the bootstrapped models are not averaged when estimating ${\cov}(\nabla\ell_{\theta^\star},\nabla\ell^{\bar f}_{\theta^\star})$ in \eqref{bootstrap_cross_cov} to ensure that each sample from the labeled dataset used 
to compute $\nabla\ell_{\hat\theta}(X_i,Y_i)$ is independent of the samples $f^{(b)}$ was trained on.

{ 
\section*{Appendix B\\Derivation of the Teacher Models Update Rule for MCPPI}
\label{appen_B}

\begin{algorithm}[!t]
  
    \KwIn{Labeled dataset $\mathcal{D}$,  unlabeled dataset $\tilde{\mathcal{D}}$}
    \KwOut{Optimized parameters $\theta^\star$}
\For{ gradient step $g$ in $\{1,\cdots,G\}$}{
Draw a batch of labeled data $(X,Y)$ from $\mathcal{D}$ and a batch of unlabeled data $\tilde X$ from $\tilde{\mathcal{D}}$\\
Sample $\hat{y}^{(k)}_u \sim f^{(k)}_{\phi^k}(\tilde X)$ and $\hat{y}^{(k)}_l \sim f^{(k)}_{\phi^k}( X)$, for $k \in \{1,\cdots,K\}$\\
Update the student model parameters using \eqref{metacppi_theta_approx}\\
\For{ $k$ in $\{1,\cdots,K\}$}{
Update parameters $\phi_k$ by performing a gradient step on the loss \eqref{metacppi_teacher_loss_approx} using~\eqref{J_approx}
}
}  
\caption{Meta-CPPI}
\label{mcppi_alg}  \vspace{0mm}
\end{algorithm}

Focusing on classification problems, we set the loss function as $\ell_\theta(X,Y)= {\rm CE}(Y, S_\theta(X))$, where ${\rm CE}(\cdot,\cdot)$ denotes the cross-entropy loss. We further assume that each of the teacher models $\{f^{(k)}_{\phi^k}(\cdot)\}_{k=1}^K$ produces a probability distribution over the labels, typically using a softmax output layer. With some abuse of notation, we write as $f^{(k)}_{\phi^k}(\cdot)$ the output probability distribution.


Moreover, for simplicity of notation, we use $\tilde{X}$ to represent a batch of unlabeled samples, and $(X, Y)$ to denote a batch of labeled data. The required gradient of the second term of the loss \eqref{metacppi_teacher_loss_approx} is as follows
\begin{align}
J_k=   \nabla_{\phi^k}\ex_{(X,Y) \sim \mathcal{D}} \left[ {\rm CE}\left( Y, S_{\theta^\star(\{\phi^k\}_{k=1}^K)}(X)\right)\right].
 \label{J_a}
\end{align}
Following the same steps as in \cite[Appendix A]{pham2021meta}, an approximation of the gradient $J_k$ is obtained as
\begin{align}
J_k&\approx \lambda \eta_S \left(\left(\nabla_{\theta'}{\rm CE}(S(X), Y) \right)^T \nabla_{\theta}{\rm CE}(S(\tilde X), \hat{y}^{(k)}_u)\right) \cdot \nabla_{\phi^k}{\rm CE}(f^{(k)}(\tilde X), \hat{y}^{(k)}_u)\nonumber \\
& \quad - \lambda \eta_S \left(\left(\nabla_{\theta'}{\rm CE}(S(X), Y) \right)^T \nabla_{\theta}{\rm CE}(S(X), \hat{y}^{(k)}_l)\right)\cdot\nabla_{\phi^k}{\rm CE}(f^{(k)}(X), \hat{y}^{(k)}_l), 
 \label{J_approx}
\end{align}
 with $\theta'$ being a batch estimate of $\theta^\star(\{\phi^k\}_{k=1}^K)$ in \eqref{metacppi_theta_approx}, and $\hat{y}^{(k)}_u$ (respectively, $\hat{y}^{(k)}_l$) being sampled as $\hat{y}^{(k)}_u \sim f^{(k)}_{\phi^k}(\tilde X)$ (respectively, as $\hat{y}^{(k)}_l \sim f^{(k)}_{\phi^k}(X)$). $\lambda$ is the tuning parameter of the tuned CPPI loss of the student model in~\eqref{metacppi_student_loss}. The optimal tuning parameter $\lambda^\star$ is estimated using \eqref{opt_lam_mean} for MCPPI. Note that the first term in \eqref{J_approx} is the same as~\cite[Equation (12)]{pham2021meta}, while the second term follows by considering the tuned CPPI loss \eqref{metacppi_student_loss}. The pseudo-code of the proposed MCPPI is given in Algorithm \ref{mcppi_alg}.

}

\bibliographystyle{IEEEtran}
\bibliography{references}
\end{document}